\numberwithin{equation}{section}
\newtheorem{Theorem}{Theorem}[section]
\newtheorem{Lemma}[Theorem]{Lemma}
\newtheorem{Proposition}[Theorem]{Proposition}
{ \theoremstyle{definition}
\newtheorem{Remark}[Theorem]{Remark} }
\begin{document}

\allowdisplaybreaks

\newcommand{\arXivNumber}{1804.10341}

\renewcommand{\thefootnote}{}

\renewcommand{\PaperNumber}{075}

\FirstPageHeading

\ShortArticleName{On Some Applications of Sakai's Geometric Theory of Discrete Painlev\'e Equations}

\ArticleName{On Some Applications of Sakai's Geometric Theory\\ of Discrete Painlev\'e Equations\footnote{This paper is a~contribution to the Special Issue on Painlev\'e Equations and Applications in Memory of Andrei Kapaev. The full collection is available at \href{https://www.emis.de/journals/SIGMA/Kapaev.html}{https://www.emis.de/journals/SIGMA/Kapaev.html}}}

\Author{Anton DZHAMAY~$^\dag$ and Tomoyuki TAKENAWA~$^\ddag$}

\AuthorNameForHeading{A.~Dzhamay and T.~Takenawa}

\Address{$^\dag$~School of Mathematical Sciences, The University of Northern Colorado,\\
\hphantom{$^\dag$}~Campus Box 122, 501 20th Street, Greeley, CO 80639, USA}
\EmailD{\href{mailto:adzham@unco.edu}{adzham@unco.edu}}

\Address{$^\ddag$~Faculty of Marine Technology, Tokyo University of Marine Science and Technology,\\
\hphantom{$^\ddag$}~2-1-6 Etchujima, Koto-ku Tokyo, 135-8533, Japan}
\EmailD{\href{mailto:takenawa@kaiyodai.ac.jp}{takenawa@kaiyodai.ac.jp}}

\ArticleDates{Received April 30, 2018, in final form July 14, 2018; Published online July 21, 2018}

\Abstract{Although the theory of discrete Painlev\'e (dP) equations is rather young, more and more examples of such equations appear in interesting and important applications. Thus, it is essential to be able to recognize these equations, to be able to identify their type, and to see where they belong in the classification scheme. The definite classification scheme for dP equations was proposed by H.~Sakai, who used geometric ideas to identify 22 different classes of these equations. However, in a major contrast with the theory of ordinary differential Painlev\'e equations, there are infinitely many non-equivalent discrete equations in each class. Thus, there is no general form for a dP equation in each class, although some nice canonical examples in each equation class are known. The main objective of this paper is to illustrate that, in addition to providing the classification scheme, the geometric ideas of Sakai give us a powerful tool to study dP equations. We consider a~very complicated example of a dP equation that describes a simple Schlesinger transformation of a Fuchsian system and we show how this equation can be identified with a much simpler canonical example of the dP equation of the same type and moreover, we give an explicit change of coordinates transforming one equation into the other. Among our main tools are the birational representation of the affine Weyl symmetry group of the equation and the period map. Even though we focus on a concrete example, the techniques that we use are general and can be easily adapted to other examples.}

\Keywords{integrable systems; Painlev\'e equations; difference equations; isomonodromic transformations; birational transformations}

\Classification{34M55; 34M56; 14E07}

\renewcommand{\thefootnote}{\arabic{footnote}}
\setcounter{footnote}{0}

\begin{flushright}\it
To the memory of A.~Kapaev
\end{flushright}

\section{Introduction} \label{sec:introduction}

Recall that the original motivation behind P.~Painlev\'e's study \cite{Pai:1902:SLEDDSOEDSDLGEU} of the class of equations that are now know as \emph{differential Painlev\'e equations} was to try to define new, purely nonlinear, special functions as general solutions of such equations, see also \cite{IwaKimShiYos:1991:FGP}. And indeed, these solutions, now known as the \emph{Painlev\'e transcendents}, are playing an increasingly important role in applications. Probably the most important example of this is the famous \emph{Tracy--Widom} distribution~\cite{TraWid:1994:LDATAK} from random matrix theory that can be expressed in terms of the Hastings--McLeod solution of the Painlev\'e-II equation.

The theory of discrete Painlev\'e equations is much more recent. Even though many examples of what is now known as discrete Painlev\'e equations appeared in various branches of mathema\-tics starting as early as the 1930s, the systematic study of discrete Painlev\'e equations started in the early 1990s in the work of B.~Grammaticos and A.~Ramani, who first, together with V.~Papageorgiou, introduced the notion of a~\emph{singularity confinement} \cite{GraRamPap:1991:IMHPP} as a discrete analogue of the Painlev\'e property and then, together with J.~Hietarinta~\cite{RamGraHie:1991:DVOTPE} applied it to systematically create examples of discrete Painlev\'e equations as deautonomizations of the so-called Quispel--Roberts--Thompson (QRT) mappings~\cite{QuiRobTho:1988:IMASE}, see the review~\cite{GraRam:2004:DPER}.

The next important step in the theory of discrete Painlev\'e equations was done by H.~Sakai who, in the fundamental paper~\cite{Sak:2001:RSAWARSGPE}, extended to the discrete case the geometric approach of K.~Okamoto \cite{Oka:1979:SLFAAEDSOAPCFDPP} to differential Painlev\'e equations. In particular, Sakai gave the definite classification scheme of discrete Painlev\'e equations based on the algebro-geometric classification of generalized Halphen surfaces $\mathcal{X}$ of index $0$ on which these equations are naturally regularized. In this classification, to each equation corresponds two ``dual'' affine Dynkin diagrams $(\mathcal{D}_{1},\mathcal{D}_{2})$. The first diagram $\mathcal{D}_{1}$ describes the \emph{geometry} of the equation by encoding the configuration of points that we blowup to obtain the surface $\mathcal{X}$, known as the \emph{Okamoto space of initial conditions} of the equation; the corresponding Sakai's classification scheme is shown on Fig.~\ref{fig:classification-surface}. The second diagram $\mathcal{D}_{2}$ describes the \emph{symmetry} structure of the equation in terms of the extended affine Weyl group $\widetilde{W}(\mathcal{D}_{2})$ associated to this diagram; the corresponding Sakai's classification scheme is shown on Fig.~\ref{fig:classification-symmetry}.

\begin{Remark} The degeneration arrows on Figs.~\ref{fig:classification-surface} and \ref{fig:classification-symmetry} are taken from E.~Rains paper~\cite{Rai:2013:GHSORS} and are corrections, for the characteristic zero, to the original Sakai's degeneration scheme given in~\cite{Sak:2001:RSAWARSGPE}. We are thankful to B.~Grammaticos, A.~Ramani, and R.~Willox for pointing this out to us, and also remark that the correct degeneration scheme for the symmetry structure (as in Fig.~\ref{fig:classification-symmetry}), with the exception of the~$A_{0}^{(1)}$-cases where no translations exist, has been used by B.~Grammaticos, A.~Ramani, and their collaborators since 2003, see~\cite{TamRamGra:2003:DI}.
\end{Remark}

Note that the connection between birational representation of affine Weyl groups and discrete dynamical systems of Painlev\'e type was studied in detail by M.~Noumi and Y.~Yamada, see, e.g.,~\cite{NouYam:1998:AWGDDSAPE}. In this setup, the equation itself is encoded by a translation element $\mathfrak{t}\in \widetilde{W}(\mathcal{D}_{2})$. Given that there are infinitely many non-equivalent translation elements, we do not have a~definite form of a discrete Painlev\'e equation of the given type, although many standard examples are well-known. A~comprehensive survey of the geometric aspects of discrete Painlev\'e equations is given in a very important recent paper~\cite{KajNouYam:2017:GAOPE}, see also references therein.

\begin{figure}[t]
 \centering\small
 \begin{tikzpicture}[>=stealth,scale=0.8]
 \node (e8e) at (0,4) {$A_{0}^{(1)}$};
 \node (a1qa) at (16,4) {$A_{7}^{(1)}$};
 \node (e8q) at (2,2) {$A_{0}^{(1)*}$};
 \node (e7q) at (4,2) {$A_{1}^{(1)}$};
 \node (e6q) at (6,2) {$A_{2}^{(1)}$};
 \node (d5q) at (8,2) {$A_{3}^{(1)}$};
 \node (a4q) at (10,2) {$A_{4}^{(1)}$};
 \node (a21q) at (12,2) {$A_{5}^{(1)}$};
 \node (a11q) at (14,2) {$A_{6}^{(1)}$};
 \node (a1q) at (16,2) {$A_{7}^{(1)}$};
 \node (a0q) at (18,2) {$A_{8}^{(1)}$};
 \node (e8d) at (4,0) {$A_{0}^{(1)**}$};
 \node (e7d) at (6,0) {$A_{1}^{(1)*}$};
 \node (e6d) at (8,0) {$A_{2}^{(1)*}$};
 \node (d4d) at (10,0) {$D_{4}^{(1)}$};
 \node (a3d) at (12,0) {$D_{5}^{(1)}$};
 \node (a11d) at (14,0) {$D_{6}^{(1)}$};
 \node (a1d) at (16,0) {$D_{7}^{(1)}$};
 \node (a0d) at (18,0) {$D_{8}^{(1)}$};
 \node (a2d) at (14,-2) {$E_{6}^{(1)}$};
 \node (a1da) at (16,-2) {$E_{7}^{(1)}$};
 \node (a0da) at (18,-2) {$E_{8}^{(1)}$};
 \draw[->] (e8e) -> (e8q); \draw[->] (a1qa) -> (a0d);
 \draw[->] (e8q) -> (e7q); \draw[->] (e8q) -> (e8d);
 \draw[->] (e7q) -> (e6q); \draw[->] (e7q) -> (e7d);
 \draw[->] (e6q) -> (d5q); \draw[->] (e6q) -> (e6d);
 \draw[->] (d5q) -> (a4q); \draw[->] (d5q) -> (d4d);
 \draw[->] (a4q) -> (a21q); \draw[->] (a4q) -> (a3d);
 \draw[->] (a21q) -> (a11q); \draw[->] (a21q) -> (a11d); \draw[->] (a21q) -> (a2d);
 \draw[->] (a11q) -> (a1q); \draw[->] (a11q) -> (a1d); \draw[->] (a11q) -> (a1qa); \draw[->] (a11q) -> (a1da);
 \draw[->] (a1q) -> (a0q); \draw[->] (a1q) -> (a0d); \draw[->] (a1q) -> (a0da);
 \draw[->] (e8d) -> (e7d);
 \draw[->] (e7d) -> (e6d);
 \draw[->] (e6d) -> (d4d);
 \draw[->] (d4d) -> (a3d);
 \draw[->] (a3d) -> (a11d); \draw[->] (a3d) -> (a2d);
 \draw[->] (a11d) -> (a1d); \draw[->] (a11d) -> (a1da);
 \draw[->] (a1d) -> (a0d); \draw[->] (a1d) -> (a0da);
 \draw[->] (a2d) -> (a1da); \draw[->] (a1da) -> (a0da);
 \end{tikzpicture}\vspace{-2mm}
\caption{Sakai's classification scheme for discrete Painlev\'e equations: the surface type.}\label{fig:classification-surface}
 \end{figure}
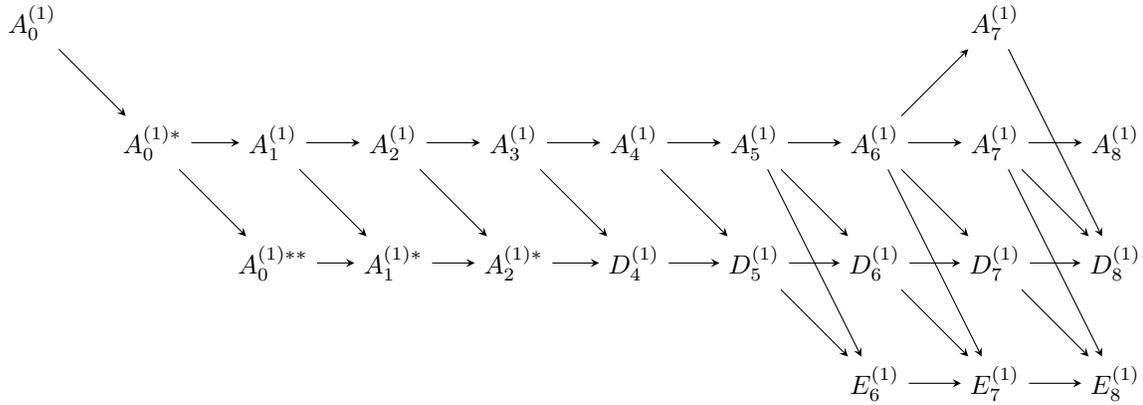

 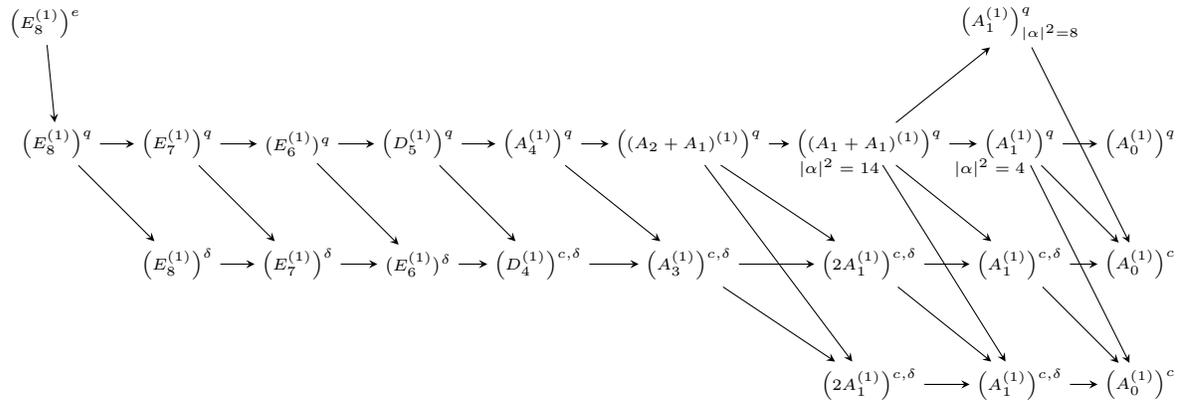
\begin{figure}[ht]
 \centering\tiny
 \begin{tikzpicture}[>=stealth,scale=0.8]
 \node (e8e) at (0.8,4) {$\left(E_{8}^{(1)}\right)^{e}$};
 \node (a1qa) at (17,4) {$\left(A_{1}^{(1)}\right)^{q}_{|\alpha|^{2}=8}$};
 \node (e8q) at (1,2) {$\left(E_{8}^{(1)}\right)^{q}$};
 \node (e7q) at (3,2) {$\left(E_{7}^{(1)}\right)^{q}$};
 \node (e6q) at (5,2) {$\big(E_{6}^{(1)}\big)^{q}$};
 \node (d5q) at (7,2) {$\left(D_{5}^{(1)}\right)^{q}$};
 \node (a4q) at (9,2) {$\left(A_{4}^{(1)}\right)^{q}$};
 \node (a21q) at (11.5,2) {$\left((A_{2} + A_{1})^{(1)}\right)^{q}$};
 \node (a11q) at (14.5,2) {$\left((A_{1} + A_{1})^{(1)}\right)^{q}$};
 \node at (14,1.6) {$|\alpha|^{2}=14$};
 \node (a1q) at (17,2) {$\left(A_{1}^{(1)}\right)^{q}$};
 \node at (16.5,1.6) {$|\alpha|^{2}=4$};
 \node (a0q) at (19,2) {$\left(A_{0}^{(1)}\right)^{q}$};
 \node (e8d) at (3,0) {$\left(E_{8}^{(1)}\right)^{\delta}$};
 \node (e7d) at (5,0) {$\left(E_{7}^{(1)}\right)^{\delta}$};
 \node (e6d) at (7,0) {$\big(E_{6}^{(1)}\big)^{\delta}$};
 \node (d4d) at (9,0) {$\left(D_{4}^{(1)}\right)^{c,\delta}$};
 \node (a3d) at (11.5,0) {$\left(A_{3}^{(1)}\right)^{c,\delta}$};
 \node (a11d) at (14.5,0) {$\left(2A_{1}^{(1)}\right)^{c,\delta}$};
 \node (a1d) at (17,0) {$\left(A_{1}^{(1)}\right)^{c,\delta}$};
 \node (a0d) at (19,0) {$\left(A_{0}^{(1)}\right)^{c}$};
 \node (a2d) at (14.5,-2) {$\left(2A_{1}^{(1)}\right)^{c,\delta}$};
 \node (a1da) at (17,-2) {$\left(A_{1}^{(1)}\right)^{c,\delta}$};
 \node (a0da) at (19,-2) {$\left(A_{0}^{(1)}\right)^{c}$};
 \draw[->] (e8e) -> (e8q); \draw[->] (a1qa) -> (a0d);
 \draw[->] (e8q) -> (e7q); \draw[->] (e8q) -> (e8d);
 \draw[->] (e7q) -> (e6q); \draw[->] (e7q) -> (e7d);
 \draw[->] (e6q) -> (d5q); \draw[->] (e6q) -> (e6d);
 \draw[->] (d5q) -> (a4q); \draw[->] (d5q) -> (d4d);
 \draw[->] (a4q) -> (a21q); \draw[->] (a4q) -> (a3d);
 \draw[->] (a21q) -> (a11q); \draw[->] (a21q) -> (a11d); \draw[->] (a21q) -> (a2d);
 \draw[->] (a11q) -> (a1q); \draw[->] (a11q) -> (a1d); \draw[->] (a11q) -> (a1qa); \draw[->] (a11q) -> (a1da);
 \draw[->] (a1q) -> (a0q); \draw[->] (a1q) -> (a0d); \draw[->] (a1q) -> (a0da);
 \draw[->] (e8d) -> (e7d);
 \draw[->] (e7d) -> (e6d);
 \draw[->] (e6d) -> (d4d);
 \draw[->] (d4d) -> (a3d);
 \draw[->] (a3d) -> (a11d); \draw[->] (a3d) -> (a2d);
 \draw[->] (a11d) -> (a1d); \draw[->] (a11d) -> (a1da);
 \draw[->] (a1d) -> (a0d); \draw[->] (a1d) -> (a0da);
 \draw[->] (a2d) -> (a1da); \draw[->] (a1da) -> (a0da);
 \end{tikzpicture}\vspace{-2mm}
\caption{Sakai's classification scheme for discrete Painlev\'e equations: the symmetry type.} \label{fig:classification-symmetry}
 \end{figure}

Similarly to their differential counterpart, discrete Painlev\'e equations appear in a wide range of applied problems, such as discrete symmetries of Fuchsian systems (Schlesinger and Okamoto transformations), B\"acklund transformations of differential Painlev\'e equations~\cite{Nou:2004:PETS}, reductions of lattice equations, and, especially, in discrete problems of random matrix type \cite{Bor:2003:DPDPE,BorBoy:2003:DOTFPIDOPE}. Thus, it is important for researchers working in these and other areas to be able to recognize an occurence of a discrete Painlev\'e equation, to understand where it fits in the classification scheme, and also to see whether it is equivalent to one of previously studied examples whose properties and special solutions are known. It turns out that Sakai's geometric theory provides a powerful set of tools to answer such questions. Although these techniques are known to the experts in the discrete Painlev\'e theory, many researchers working with applied problems involving discrete Painlev\'e equations may not realize the strength of the geometric approach. Thus, one of our goals for the present survey paper is to illustrate the power of this approach, providing enough details to make this paper of immediate practical value to a wider audience of researchers working with applications. For that reason, we tried to make the paper essentially self-contained. Although we work with one concrete example, the techniques that we use can be adapted, in a rather straightforward way, to other situations as well.

The example that we consider was first obtained by the authors in \cite{DzhSakTak:2013:DSTTHFADPE} and it describes an elementary two-point Schlesinger transformation of a Fuchsian system of spectral type $111,111,111$. According to P.~Boalch~\cite{Boa:2009:QADPE}, the resulting equation should be a discrete Painlev\'e equation d-$P\big(A_{2}^{(1)*}\big)$ of surface type $A_{2}^{(1)*}$ whose symmetry group is the extended affine Weyl group $\widetilde{W}\big(E_{6}^{(1)}\big)$. However, understanding this equation directly from the isomonodromic setting is not very easy.

In general, given a birational mapping $\psi$ that we expect to be of discrete Painlev\'e type, the problem of classifying it and possibly matching it with a known mapping $\varphi$ can be broken down into the following steps.
\begin{enumerate}[(a)]\itemsep=0pt
\item The mapping is regularized by successfully resolving all of its indeterminate points using the blowup procedure.
\item The resulting algebraic surface may not be minimal. If this happens, we need to find a~relatively minimal surface by blowing down some unnecessary $-1$-curves. This problem was considered for the autonomous case by one of the authors, together with A.S.~Carstea, in~\cite{CarTak:2013:ANOMORSOFBDS} and for the non-autonomous case in a recent preprint by T.~Mase~\cite{Mas:2017:SOSOICFNMOTP}. Fortunately, this issue does not appear for the present example.
\item As a result of the first two steps we get the type of our mapping. However, the choice of root lattices and the geometric realization of the anti-canonical divisor may be different from the standard case. Thus, we need to find a change of basis of the Picard lattice that will map the surface and the symmetry root lattices to the standard ones, and then find a~birational mapping (i.e., a change of coordinates) that induces this change of basis. For the present example this had been done in~\cite{DzhTak:2015:GAORFSTTDPE}.
\item At this point we have matched the geometry of the problem with the geometry of some model example of a discrete Painlev\'e equation of the same type, and now we can compare the dynamics. This is done by comparing the corresponding translation elements $\psi_{*}$ and $\varphi_{*}$ in the affine Weyl symmetry group. If these elements are conjugated, $\psi_{*} = \sigma_{*} \circ \varphi_{*} \circ \sigma_{*}^{-1}$, then the resulting dynamics are equivalent and the underlying birational mapping $\sigma$ gives the explicit change of variables transforming one equation into the other. In the present paper we primarily focus on this final step.
\end{enumerate}

In our example, even after steps (a)--(c) that give us natural isomonodromic coordinates $(x,y)$, the resulting equation is still very complicated and has the form
\begin{gather}
 \bar{x} = \frac{(\alpha(x,y) - \beta(x,y))
 \big(\alpha(x,y) x \big(\theta_{1}^{1} - \theta_{1}^{2}\big) +
 \big(1 + \theta_{0}^{2}\big)\big(x \big(y - \theta_{1}^{2}\big) + y \big(\theta_{0}^{1} - \theta_{0}^{2}\big)\big)\big) }{
 (\alpha(x,y) - \beta(x,y)) \big(x \big(y - \theta_{1}^{2}\big) + \big(\theta_{0}^{1} - \theta_{0}^{2}\big)y\big) -
 \alpha(x,y) \big(\theta_{1}^{1} + 1\big)\big(\theta_{0}^{1} - \theta_{0}^{2}\big) },\nonumber\\
 \bar{y} = \frac{ (\alpha(x,y) - \beta(x,y))\big(
 y\big(x + \theta_{0}^{1} - \theta_{0}^{2}\big) - \theta_{1}^{2} x\big)
 }{ \alpha(x,y) \big(\theta_{0}^{1} - \theta_{0}^{2}\big) }, \label{eq:ex-ST}
\end{gather}
where $\theta_{i}^{j}$ and $\kappa_{i}$ are some parameters and
\begin{gather*}
\alpha(x,y) = \frac{1}{ (x + y) \big(\theta_{1}^{1} - \theta_{1}^{2}\big) } \left(y r_{1}(x,y) + \frac{ x\big(\theta_{0}^{1} r_{1}(x,y) +
r_{2}(x,y)\big) }{ x + \theta_{0}^{1} - \theta_{0}^{2} }\right), \\
\beta(x,y) = \frac{ \big(\big(y + \theta_{0}^{2}\big) r_{1}(x,y) + r_{2}(x,y)\big) }{
 (x + y) \big(\theta_{1}^{1} - \theta_{1}^{2}\big) },\\
 r_{1}(x,y) = \kappa_{1} \kappa_{2} + \kappa_{2} \kappa_{3} + \kappa_{3} \kappa_{1} -
 \big(y - \theta_{1}^{2}\big)\big(x - \theta_{0}^{2}\big) - \theta_{0}^{1} \big(y + \theta_{0}^{2}\big) -
 \theta_{1}^{1} \big(\theta_{0}^{1} + \theta_{0}^{2} + \theta_{1}^{2}\big),\\
 r_{2}(x,y) =\kappa_{1} \kappa_{2} \kappa_{3} +
 \theta_{1}^{1}\big(\big(y - \theta_{1}^{2}\big)\big(x - \theta_{0}^{2}\big) + \theta_{0}^{1} \big(y + \theta_{0}^{2}\big)\big).
\end{gather*}

On the other hand, there is a well-known example of a discrete Painlev\'e equation of type d-$P\big(A_{2}^{(1)*}\big)$ that was obtained previously by Grammaticos, Ramani, and Ohta as a deautonomization of a QRT mapping~\cite{GraRamOht:2003:AUDOTAQVADIEATST}. This equation has a~much nicer form
\begin{gather}
(f + g)(\bar{f}+g) =\frac{(g+b_1)(g+b_2)(g+b_3)(g+b_4)}{(g-b_5)(g-b_6)},\nonumber\\
(\bar{f}+g)(\bar{f}+\bar{g}) =\frac{(\bar{f}-{b}_1) (\bar{f}-{b}_2)(\bar{f}-{b}_3) (\bar{f}-{b}_4)}{(\bar{f}+{b}_7 - d)(\bar{f}+{b}_8 - d)},\label{eq:ex-QRT}
\end{gather}
where $b_{1},\dots, b_{8}$ are some parameters and $d = b_{1} + \cdots + b_{8}$. Note also that equation~\eqref{eq:ex-ST} is written in the evolutionary form, and equation~\eqref{eq:ex-QRT} is not.

Then the reasonable question to ask is whether these two equations are, in some sense, the same. This question is very natural, since both equations describe essentially the simplest examples of dynamic, in their respective contexts. Contrary to our original expectations, these two equations turned out to be \emph{equivalent} through an explicit change of variables
\begin{gather*}
 f=\dfrac{x \big(y - \theta_{1}^{1}\big) - \big(\kappa_{1} + \theta_{0}^{2} + \theta_{1}^{1}\big) y}{y + \kappa_{1} + \theta_{0}^{2}},\qquad
 g=\dfrac{x (y + \kappa_{1} + \theta_{0}^{1}) + \big(\theta_{0}^{1} - \theta_{0}^{2}\big)y }{x- \kappa_{1} - \theta_{0}^{2}}
\end{gather*}
that transforms one equation into the other. We also have the explicit identification between the two sets of parameters of the equation. It is clear that this equivalence, and especially the resulting change of variables, are impossible to see directly. On the other hand, both follow very naturally from Sakai's geometric theory, and so we think that this is a good illustration of the power of the geometric approach to the theory of discrete Painlev\'e equations.

The paper is organized as follows. In Section~\ref{sec:the_geometry_of_the_okamoto_surface_of_type_a__2_1} we construct an explicit parameterization of the Okamoto space of initial conditions of type $A_{2}^{(1)*}$, we refer to this parameterization as the \emph{canonical model}. In particular, we introduce very important notions of the \emph{period map} and the \emph{root variable parameterization}. In Section~\ref{sec:the_structure_of_the_symmetry_group_widetilde_w_left_e__6_1_right} we construct the birational representation of the extended affine Weyl symmetry group of $\widetilde{W}\big(E_{6}^{(1)}\big)$ of this surface and explain how to represent translational elements in this group in terms of generators, i.e., elementary reflections and automorphisms. Finally, in Section~\ref{sec:comparison_of_two_discrete_painlev_e_equations_of_type_d_p_a__2_1} we construct the spaces of initial conditions for the two equations we consider and find the isomorphisms between the two. This allows us to establish the correspondence between the parameters of the equations through the period map, and also to compute the translational elements corresponding to both equations. We then represent these translational elements as words in the generators of the symmetry group and show that these words are \emph{conjugate}. This establishes the equivalence of the equations and the conjugation element gives us the necessary change of variables transforming one equation into the other.

\section[The geometry of the Okamoto surface of type $A_{2}^{(1)*}$]{The geometry of the Okamoto surface of type $\boldsymbol{A_{2}^{(1)*}}$}\label{sec:the_geometry_of_the_okamoto_surface_of_type_a__2_1}

In this section we prepare the necessary tools to study discrete Painlev\'e equations that are regularized on the family
of generalized Halphen surfaces of type $A_{2}^{(1)*}$.

\subsection[A canonical model of the Okamoto surface of type $A_{2}^{(1)*}$]{A canonical model of the Okamoto surface of type $\boldsymbol{A_{2}^{(1)*}}$}\label{sub:canonical_model}

According to Sakai's theory, all discrete Painlev\'e equations describe dynamics on a family of rational algebraic surfaces that are obtained by blowing up a complex projective plane at a certain number of (possibly infinitely close) points. From the general theory point of view it is better to consider $\mathbb{P}^{2}$ compactification of $\mathbb{C}^{2}$ blown up at~$9$ points since it includes all cases, but from the dynamical systems point of view it is more natural to consider the birationally equivalent $\mathbb{P}^{1} \times \mathbb{P}^{1}$ compactification of $\mathbb{C}^{2}$ blown up at $8$ points (which excludes the $E_{8}^{(1)}$ surface with $A_{0}^{(1)}$ symmetry that corresponds to the $P_{\text{I}}$ case), and that is what we will do. The classification part reflects constraints on configurations of the blowup points, surfaces of different types correspond to different possible configurations. Within each configuration the points can still move and if we denote by $\mathbf{b} = \{b_{i}\}$ the set of parameters describing the location of points within the confi\-guration, we get the family $\mathcal{X}_\mathbf{{b}} := \operatorname{Bl}_{p_{1},\dots,p_{8}}\big(\mathbb{P}^{1} \times \mathbb{P}^{1}\big)$. This family is called the \emph{Okamoto space of initial conditions} or the \emph{Okamoto surface}, for short. The configuration of blowup points is then encoded in the configuration of the irreducible components of the \emph{anti-canonical divisor}~$-\mathcal{K}_{\mathcal{X}}$ of the surface.

The group of the divisor classes $\operatorname{Cl}(\mathcal{X}) = \operatorname{Div}(\mathcal{X})/\operatorname{P}(\mathcal{X})$ in this case coincides with the \emph{Picard group} (also called the \emph{Picard lattice}) $\operatorname{Pic}(\mathcal{X}) = H^{1}(\mathcal{X},\mathcal{O}_{\mathcal{X}}^{*})$. We have
\begin{gather*}
 \operatorname{Pic}\big(\mathbb{P}^{1} \times \mathbb{P}^{1}\big) = \operatorname{Span}_{\mathbb{Z}}\{\mathcal{H}_{f},\mathcal{H}_{g}\},\qquad
 \operatorname{Pic}(\mathcal{X}) = \operatorname{Span}_{\mathbb{Z}}\{\mathcal{H}_{f},\mathcal{H}_{g}, \mathcal{E}_{1},\dots, \mathcal{E}_{8} \},
\end{gather*}
where $(f,g)$ are the coordinates in the affine $\mathbb{C}^{2}$-chart of $\mathbb{P}^{1} \times \mathbb{P}^{1}$ (the other three charts being $(F=1/f,g)$, $(f,G=1/g)$, and $(F,G)$), $\mathcal{H}_{f}$ is the class of \emph{vertical} and $\mathcal{H}_{g}$ is the class of \emph{horizontal} lines on $\mathbb{P}^{1} \times \mathbb{P}^{1}$ (or the classes of their total transform after the blowup), and $\mathcal{E}_{i}$ is the class of the exceptional divisor $E_{i}$ of the blowup centered at $p_{i}$. This lattice is equipped with the \emph{intersection form} that, on the generators, is given by
\begin{gather*}
\mathcal{H}_{f}\bullet \mathcal{H}_{f} = \mathcal{H}_{g}\bullet \mathcal{H}_{g} = \mathcal{H}_{f}\bullet \mathcal{E}_{i} = \mathcal{H}_{g}\bullet \mathcal{E}_{i} = 0,\qquad \mathcal{H}_{f}\bullet \mathcal{H}_{g} = 1,\qquad \mathcal{E}_{i}\bullet \mathcal{E}_{j} = - \delta_{ij}.
\end{gather*}

The anti-canonical divisor class $-\mathcal{K}_{\mathcal{X}}$ is dual to the canonical divisor class $\mathcal{K}_{\mathcal{X}}$ of the top $2$-form $\omega$; on $\mathbb{P}^{1} \times \mathbb{P}^{1}$ we can take $\omega$ to be, up to a multiplication by some rational function, the standard symplectic form,
\begin{gather*}
 \omega = df\wedge dg = - \frac{dF\wedge dg}{F^{2}} = - \frac{df\wedge dG}{G^{2}} = \frac{dF\wedge dG}{F^{2}G^{2}},
\end{gather*}
and therefore
\begin{gather*}
 (\omega) = - 2 \mathcal{H}_{f} - 2 \mathcal{H}_{g} = \mathcal{K}_{\mathbb{P}^{1} \times \mathbb{P}^{1}}\in \operatorname{Pic}\big(\mathbb{P}^{1} \times \mathbb{P}^{1}\big).
\end{gather*}
Lifting this class to $\mathcal{X}$ and dualizing, we get
\begin{gather*}
 - \mathcal{K}_{\mathcal{X}} = 2 \mathcal{H}_{f} + 2 \mathcal{H}_{g} - \mathcal{E}_{1} - \mathcal{E}_{2} - \mathcal{E}_{3}- \mathcal{E}_{4}- \mathcal{E}_{5}- \mathcal{E}_{6}- \mathcal{E}_{7}- \mathcal{E}_{8} \in
 \operatorname{Pic}(\mathcal{X}).
\end{gather*}
We now impose the additional requirement of Painlev\'e theory that $\mathcal{X}_{\mathbf{b}}$ is the \emph{generalized Halphen surface of index $0$}, i.e., that it has the \emph{unique} anti-canonical divisor $-K_{\mathcal{X}}\in - \mathcal{K}_{\mathcal{X}}$ of \emph{canonical type}, which means that if $-K_{\mathcal{X}} = \sum_{i} m_{i} D_{i}$ is the decomposition of $-K_{\mathcal{X}}$ into irreducible components and $\delta_{i} = [D_{i}]$, then $-\mathcal{K}_{\mathcal{X}}\bullet \delta_{i} = 0$ for all $i$. From the \emph{adjunction formula} it then follows that when the components are rational (genus $0$ curves), $\delta_{i}\bullet \delta_{i} = -2$; one exception is the elliptic case and its singular degenerations (or the $A_{0}^{(1)}$, $A_{0}^{(1)*}$, and $A_{0}^{(1)**}$ surfaces), where $- \mathcal{K}_{\mathcal{X}} = \delta$ and $\delta\bullet \delta = 0$.

The decomposition $-\mathcal{K}_{\mathcal{X}} = \sum_{i} m_{i} \delta_{i}$, as well as the intersection configuration of its irreducible components, can be encoded by an \emph{affine Dynkin diagram} and the type of this diagram is the \emph{type} of the Okamoto surface and the corresponding discrete Painlev\'e equations. In the $A_{2}^{(1)}$ case, we get
 \begin{center}
 \begin{tabular}{ccc}
 $\displaystyle
 \raisebox{-0.5in}{\begin{tikzpicture}[
 elt/.style={circle,draw=black!100,thick, inner sep=0pt,minimum size=2mm}]
 \path ( -1,-1) node (D0) [elt] {}
 ( 1,-1) node (D1) [elt] {}
 ( 0,0.8) node (D2) [elt] {};
 \draw [black,line width=1pt ] (D0) -- (D1) -- (D2) -- (D0);
 \node at ($(D0.south east) + (-0.5,-0.1)$) {$\delta_{0}$};
 \node at ($(D1.south west) + (+0.5,-0.1)$) {$\delta_{1}$};
 \node at ($(D2.north west) + (+0.5,+0.1)$) {$\delta_{2}$};
 \end{tikzpicture}}$ \qquad&\qquad $\displaystyle
 \begin{bmatrix}
 -2 & 1 & 1 \\ 1 & -2 & 1 \\ 1 & 1 & -2
 \end{bmatrix}
 $ \qquad&\qquad
 $- \mathcal{K}_{\mathcal{X}} = \delta_{0} + \delta_{1} + \delta_{2}$\\
 Dynkin diagram $A_{2}^{(1)}$ \qquad&\qquad its Cartan matrix \qquad&\qquad $-\mathcal{K}_{\mathcal{X}}$ decomposition
 \end{tabular}
 \end{center}
Note that here it is convenient to use the sign convention for the Cartan matrix $\mathbf{A}= [\delta_{i}\bullet \delta_{j}]$ that is the opposite to the standard one.

Since in the decomposition $- \mathcal{K}_{\mathcal{X}} = \delta = \delta_{0} + \delta_{1} + \delta_{2}$ we want $\delta_{i}$ to be classes of effective divisors, $\delta_{i} = [D_{i}]$ with $\delta_{i}^{2} = -2$, we can choose, essentially up to relabeling,
\begin{gather}
 D_{0} = H_{f} + H_{g} - E_{1} - E_{2} - E_{3} - E_{4} \in
 \delta_{0} = \mathcal{H}_{f} + \mathcal{H}_{g} - \mathcal{E}_{1} - \mathcal{E}_{2} - \mathcal{E}_{3} - \mathcal{E}_{4},\notag\\
 D_{1} = H_{f} - E_{5} - E_{6} \in \delta_{1} = \mathcal{H}_{f} - \mathcal{E}_{5} - \mathcal{E}_{6},\label{eq:d-roots} \\
 D_{2} = H_{g} - E_{7} - E_{8} \in \delta_{2} = \mathcal{H}_{g} - \mathcal{E}_{7} - \mathcal{E}_{8}.\notag
\end{gather}
In other words, $D_{0}$ is the \emph{proper transform} under the blowup procedure of a curve of bi-deg\-ree~$(1,1)$ passing through the blowup points $p_{1},\dots,p_{4}$, $D_{1}$ is the proper transform of a vertical line passing through the points $p_{5}$ and $p_{6}$, and $D_{2}$ is the proper transform of a horizontal line passing through the points $p_{7}$ and $p_{8}$. \emph{This is exactly how the decomposition of the anti-canonical divisor into irreducible components and their intersection configuration encodes the point configuration of the blowup points}.

Note, however, that there are two \emph{different geometric configurations} related to the algebraic intersection structure given by the $A_{2}^{(1)}$ Dynkin diagram:
\begin{center}
 \begin{tabular}{ccc}
 \begin{tikzpicture}[
 elt/.style={circle,draw=black!100,thick, inner sep=0pt,minimum size=2mm}]
 \path ( -1,-1) node (D0) [elt] {}
 ( 1,-1) node (D1) [elt] {}
 ( 0,0.8) node (D2) [elt] {};
 \draw [black,line width=1pt ] (D0) -- (D1) -- (D2) -- (D0);
 \node at ($(D0.south east) + (-0.5,-0.1)$) {$\delta_{0}$};
 \node at ($(D1.south west) + (+0.5,-0.1)$) {$\delta_{1}$};
 \node at ($(D2.north west) + (+0.5,+0.1)$) {$\delta_{2}$};
 \end{tikzpicture} \qquad&\qquad
 \begin{tikzpicture}[
 elt/.style={circle,draw=black!100, fill=black!100, thick, inner sep=0pt,minimum size=1mm}]
 \draw[thick] (-0.2,0.3) -- (1.2,-1.8);
 \draw[thick] (0.2,0.3) -- (-1.2,-1.8);
 \draw[thick] (-1.5,-1.5) -- (1.5,-1.5);
 \node[elt] at (-0.6,-1.5) {}; \node[elt] at (-0.2,-1.5) {}; \node[elt] at (0.2,-1.5) {}; \node[elt] at (0.6,-1.5) {};
 \node[elt] at (-0.7,-1.05) {}; 
 \node[elt] at (-0.3,-0.45) {};
 \node[elt] at (0.7,-1.05) {}; 
 \node[elt] at (0.3,-0.45) {};
 \end{tikzpicture} \qquad&\qquad
 \begin{tikzpicture}[
 elt/.style={circle,draw=black!100, fill=black!100, thick, inner sep=0pt,minimum size=1mm}]
 \draw[thick] (-0.2,0.3) -- (1.2,-1.8);
 \draw[thick] (0.2,0.3) -- (-1.2,-1.8);
 \draw[thick] (0,-1.8) -- (0,0.3);
 \node[elt] at (0,-0.4) {}; \node[elt] at (0,-0.7) {}; \node[elt] at (0,-1) {}; \node[elt] at (0,-1.3) {};
 \node[elt] at (-0.8,-1.2) {}; 
 \node[elt] at (-0.4,-0.6) {};
 \node[elt] at (0.8,-1.2) {}; 
 \node[elt] at (0.4,-0.6) {};
 \end{tikzpicture} \\
 Dynkin diagram $A_{2}^{(1)}$ \qquad&\qquad $A_{2}^{(1)}$ surface (multiplicative) \qquad&\qquad $A_{2}^{(1)*}$ surface (additive)
 \end{tabular}
\end{center}

We are interested in the additive dynamic given by $A_{2}^{(1)*}$, so we want all of the irreducible components of the anti-canonical divisor to intersect at one point. Then, without the loss of generality (i.e., acting by the group of M\"obius transformations on each of the two $\mathbb{P}^{1}$-factors) we can assume that the component $D_{1}= H_{f} - E_{5} - E_{6}$ under the blowing down map projects to the line $f = \infty$ (and so there are two blowup points $p_{5}(\infty, b_{5})$ and $p_{6}(\infty, b_{6})$ on that line), the component $D_{2} = H_{g} - E_{7} - E_{8}$ projects to the line $g = \infty$ with points $p_{7}(-b_{7},\infty)$ and $p_{8}(-b_{8},\infty)$, and the component $D_{0} = H_{f} + H_{g} - E_{1} - E_{2} - E_{3} - E_{4}$ projects to the line $f + g = 0$, and so $D_{0}\cap D_{1}\cap D_{2} = (\infty,\infty)$.

Thus, we get the following geometric realization of a (family of) surface(s) $\mathcal{X}_{\mathbf{b}}$ of type $A_{2}^{(1)*}$:

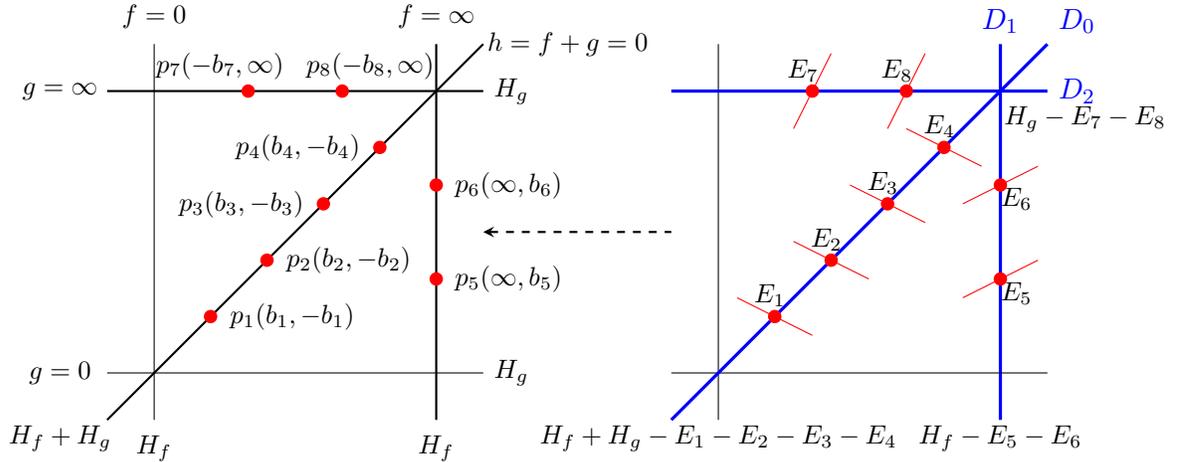
\begin{figure}[h]\centering
\begin{tikzpicture}[>=stealth,
 elt/.style={circle,draw=red!100, fill=red!100, thick, inner sep=0pt,minimum size=1.5mm},scale=1.25]
 \draw[black] (-4.5,-0.5) -- (-4.5,3.5);
 \draw[black,thick] (-1.5,-0.5) -- (-1.5,3.5);
 \draw[black] (-5,0) -- (-1,0);
 \draw[black,thick] (-5,3) -- (-1,3);
 \draw[black,thick] (-5,-0.5) -- (-1,3.5);
 \node[style = elt] (p7) at (-3.5,3) {}; \node[elt] at (-2.5,3) (p8) {};
 \node at ($(p7.north) + (-0.3,0.2)$) {\small $p_{7}(-b_{7},\infty)$};
 \node at ($(p8.north) + (0.3,0.2)$) {\small $p_{8}(-b_{8},\infty)$};
 \node[elt] at (-1.5,2) (p6) {}; \node[elt] at (-1.5,1) (p5) {};
 \node at ($(p5.east) + (0.7,0)$) {\small $p_{5}(\infty,b_{5})$};
 \node at ($(p6.east) + (0.7,0)$) {\small $p_{6}(\infty,b_{6})$};
 \node[elt] at (-2.1,2.4) (p4){}; \node[elt] at (-2.7,1.8) (p3) {};
 \node[elt] at (-3.3,1.2) (p2) {}; \node[elt] at (-3.9,0.6) (p1) {};
 \node at ($(p1.east) + (0.8,0)$) {\small $p_{1}(b_{1},-b_{1})$};
 \node at ($(p2.east) + (0.8,0)$) {\small $p_{2}(b_{2},-b_{2})$};
 \node at ($(p3.west) + (-0.8,0)$) {\small $p_{3}(b_{3},-b_{3})$};
 \node at ($(p4.west) + (-0.8,0)$) {\small $p_{4}(b_{4},-b_{4})$};

 \node at (-5.5,3) {\small $g=\infty$}; \node at (-0.7,3) {\small $H_{g}$};
 \node at (-5.5,0) {\small $g=0$}; \node at (-0.7,0) {\small $H_{g}$};
 \node at (-4.5,3.8) {\small $f=0$}; \node at (-1.5,3.8) {\small $f = \infty$};
 \node at (-4.5,-0.8) {\small $H_{f}$}; \node at (-1.5,-0.8) {\small $H_{f}$};
 \node at (-5.5,-0.7) {\small $H_{f}+H_{g}$};
 \node at (-0.1,3.5) {\small $h = f + g = 0$};
 \draw[black] (1.5,-0.5) -- (1.5,3.5);
 \draw[blue,very thick] (4.5,-0.5) -- (4.5,3.5) node[above] {$D_{1}$};

 \draw[black] (1,0) -- (5,0);
 \draw[blue,very thick] (1,3) -- (5,3) node[right] {$D_{2}$};
 \draw[blue,very thick] (1,-0.5) -- (5,3.5) node[above right] {$D_{0}$};
 \node at (5.4,2.7) {\small $H_{g} - E_{7} - E_{8}$};
 \node at (4.5,-0.7) {\small $H_{f} - E_{5} - E_{6}$};
 \node at (1.5,-0.7) {\small $H_{f} + H_{g} - E_{1} - E_{2} - E_{3} - E_{4}$};

 \node[style = elt] (e7) at (2.5,3) {}; \node[elt] at (3.5,3) (e8) {};
 \draw[red] (2.3,2.6) -- (2.7,3.4); \draw[red] (3.3,2.6) -- (3.7,3.4);
 \node[elt] at (4.5,2) (e6) {}; \node[elt] at (4.5,1) (e5) {};
 \draw[red] (4.1,1.8) -- (4.9,2.2); \draw[red] (4.1,0.8) -- (4.9,1.2);
 \node[elt] at (3.9,2.4) (e4){}; \node[elt] at (3.3,1.8) (e3) {};
 \node[elt] at (2.7,1.2) (e2) {}; \node[elt] at (2.1,0.6) (e1) {};
 \draw[red] (1.7,0.8) -- (2.5,0.4);
 \draw[red] (2.3,1.4) -- (3.1,1);
 \draw[red] (2.9,2) -- (3.7,1.6);
 \draw[red] (3.5,2.6) -- (4.3,2.2);
 \node at ($(e1.north) + (-0.05,0.15)$) {\small $E_{1}$};
 \node at ($(e2.north) + (-0.05,0.15)$) {\small $E_{2}$};
 \node at ($(e3.north) + (-0.05,0.15)$) {\small $E_{3}$};
 \node at ($(e4.north) + (-0.05,0.15)$) {\small $E_{4}$};
 \node at ($(e5.east) + (0.1,-0.15)$) {\small $E_{5}$};
 \node at ($(e6.east) + (0.1,-0.15)$) {\small $E_{6}$};
 \node at ($(e7.north) + (-0.1,0.15)$) {\small $E_{7}$};
 \node at ($(e8.north) + (-0.1,0.15)$) {\small $E_{8}$};
 \draw[thick, black,<-,dashed] (-1,1.5) -- (1,1.5);
\end{tikzpicture}
\caption{Point Configuration and the canonical model of the Okamoto Surface of type $A_{2}^{(1)*}$.}\label{fig:points-standard}
\end{figure}

Note that the points $p_{1},\dots,p_{8}$ lie on the (degenerate) $(2,2)$-curve that is the pole divisor of the 2-form
\begin{gather}\label{eq:symp-form}
 \omega = \frac{df\wedge dg}{f+g} = - \frac{dF\wedge dg}{F(1 + Fg)} = - \frac{df\wedge dG}{G(fG + 1)} = \frac{dF\wedge dG}{FG(F+G)},
\end{gather}
whose pull-back under the blowup gives, after dualizing, the unique anti-canonical divisor $-K_{\mathcal{X}}$.

It is now clear how the parameters $\mathbf{b} = \{b_{1},\dots b_{8}\}$ describe the location of the blowup points within the fixed configuration. However, the true number of parameters if less, since there is still a two-parameter family of M\"obius transformations preserving this configuration:
 \begin{gather*}
 \left(\begin{matrix}
 b_{1} & b_{2} & b_{3} & b_{4}\\
 b_{5} & b_{6} & b_{7} & b_{8}
 \end{matrix}; \begin{matrix}
 f \\ g
 \end{matrix}\right) \sim \left(\begin{matrix}
 \lambda b_{1} + \mu & \lambda b_{2} + \mu & \lambda b_{3} + \mu & \lambda b_{4} + \mu\\
 \lambda b_{5} - \mu & \lambda b_{6} - \mu & \lambda b_{7} - \mu & \lambda b_{8} - \mu
 \end{matrix}; \begin{matrix}
 \lambda f + \mu \\ \lambda g - \mu
 \end{matrix}\right),\qquad \lambda\neq0,
 \end{gather*}
and hence we can use this action to normalize two of the parameters $b_{i}$. It turns out that the correct gauge-invariant parameterization is given by the \emph{root variables} that we now describe.

\subsection{The Period Map and the Root Variable Parameterization}\label{sub:the_period_map_and_the_root_variable_parameterization}
To define the root variables, we first need to introduce the \emph{period map}, which is one of the main tools in the geometric approach, see~\cite{Sak:2001:RSAWARSGPE}. To define this map, we first need to define the \emph{symmetry sub-lattice} $Q = \Pi\big(R^{\perp}\big) \triangleleft \operatorname{Pic}(\mathcal{X})$, which is the orthogonal complement, w.r.t.~the intersection form, of the surface sub-lattice $\Pi(R) = \operatorname{Span}_{\mathbb{Z}}\{\delta_{i}\}\triangleleft \left(- \mathcal{K}_{\mathcal{X}}\right)^{\perp}\triangleleft \operatorname{Pic}(\mathcal{X})$. Note that the lattice bases $R=\{\delta_{i}\}$ and $R^{\perp}=\{\alpha_{j}\}$ can be chosen to consist of \emph{simple roots}, in the sense of the affine Weyl group theory, and so both bases can be encoded by affine Dynkin diagrams. A straightforward
direct computation gives
\begin{gather*}
 Q = \operatorname{Span}_{\mathbb{Z}}\{
 \alpha_{0},\alpha_{1},\alpha_{2},\alpha_{3},\alpha_{4},\alpha_{5},\alpha_{6}\}= Q\big(E_{6}^{(1)}\big)
 = \big(\operatorname{Span}_{\mathbb{Z}}\{\mathcal{\delta}_{0}, \mathcal{\delta}_{1}, \mathcal{\delta}_{2} \}\big)^{\perp}
 = Q\big(\big(A_{2}^{(1)}\big)^{\perp}\big),
\end{gather*}
where the simple roots $\alpha_{i}$ are given by
\begin{equation}\label{eq:a-roots}
 \raisebox{-40pt}{\begin{tikzpicture}[
 elt/.style={circle,draw=black!100,thick, inner sep=0pt,minimum size=2mm}]
 \path (-2,0) node (a0) [elt] {}
 (-1,0) node (a1) [elt] {}
 ( 0,0) node (a2) [elt] {}
 ( 1,0) node (a3) [elt] {}
 ( 2,0) node (a4) [elt] {}
 ( 0,1) node (a5) [elt] {}
 ( 0,2) node (a6) [elt] {};
 \draw [black,line width=1pt ] (a0) -- (a1) -- (a2) -- (a3) -- (a4) (a2) -- (a5) -- (a6);
 \node at ($(a0.south) + (0,-0.2)$) {$\alpha_{0}$};
 \node at ($(a1.south) + (0,-0.2)$) {$\alpha_{1}$};
 \node at ($(a2.south) + (0,-0.2)$) {$\alpha_{2}$};
 \node at ($(a3.south) + (0,-0.2)$) {$\alpha_{3}$};
 \node at ($(a4.south) + (0,-0.2)$) {$\alpha_{4}$};
 \node at ($(a5.east) + (0.3,0)$) {$\alpha_{5}$};
 \node at ($(a6.east) + (0.3,0)$) {$\alpha_{6}$};
 \end{tikzpicture}} \qquad
 \begin{alignedat}{2}
 \alpha_{0} &= \mathcal{E}_{3} - \mathcal{E}_{4}, &\qquad \alpha_{4} &= \mathcal{E}_{7} - \mathcal{E}_{8},\\
 \alpha_{1} &= \mathcal{E}_{2} - \mathcal{E}_{3}, &\qquad \alpha_{5} &= \mathcal{H}_{g} - \mathcal{E}_{1} - \mathcal{E}_{5},\\
 \alpha_{2} &= \mathcal{E}_{1} - \mathcal{E}_{2}, &\qquad \alpha_{6} &= \mathcal{E}_{5} - \mathcal{E}_{6}.\\
 \alpha_{3} &= \mathcal{H}_{f} - \mathcal{E}_{1} - \mathcal{E}_{7},
 \end{alignedat}
\end{equation}
Note also that $\displaystyle \delta = - \mathcal{K}_{\mathcal{X}} = \alpha_{0} + 2 \alpha_{1} + 3 \alpha_{2} + 2 \alpha_{3} + \alpha_{4} + 2 \alpha_{5} + \alpha_{6}$.

The \emph{period map} $\chi\colon Q\to \mathbb{C}$ is defined on the simple roots $\alpha_{i}$ and then extended by the linearity; the \emph{root variables} $a_{i}$ are defined by $a_{i}:= \chi(\alpha_{i})$. To define $\chi(\alpha_{i})$, we proceed as follows, see \cite{Sak:2001:RSAWARSGPE} for details:
\begin{itemize}\itemsep=0pt
\item first, we represent $\alpha_{i}$ as a difference of two effective divisors, $\alpha_{i} = \big[C_{i}^{1}\big] - \big[C_{1}^{0}\big]$;
\item second, there exists a \emph{unique} component $D_{k}$ of $-K_{\mathcal{X}}$ such that $D_{k}\bullet C_{i}^{1} = D_{k}\bullet C_{i}^{0} = 1$; put $P_{i} = D_{k}\cap C_{i}^{0}$ and $Q_{i} = D_{k}\cap C_{i}^{1}$:
\begin{center}
 \begin{tikzpicture}[>=stealth,
 elt/.style={circle,draw=black!100, fill=black!100, thick, inner sep=0pt,minimum size=1.5mm}]
 \draw[black, very thick] (0,0) -- (4,0);
 \draw[blue, thick] (1,0) -- (1,0.5);
 \draw[blue,thick] (1,0) .. controls (1,-0.3) and (1,-0.6) .. (0.6,-1);
 \draw[blue, thick] (3,0) -- (3,0.5);
 \draw[blue,thick] (3,0) .. controls (3,-0.3) and (3,-0.6) .. (3.4,-1);
 \node[style=elt] (P) at (1,0) {}; \node [above left] at (P) {$P_{i}$};
 \node[style=elt] (Q) at (3,0) {}; \node [above right] at (Q) {$Q_{i}$};
 \node at (-0.6,0) {$D_{k}$};
 \node at (0.4,-1) {$C_{i}^{0}$}; \node at (3.7,-1) {$C_{i}^{1}$};
 \end{tikzpicture}
 \end{center}
 \item then
 \begin{gather*}
 \chi(\alpha_{i}) = \chi\left(\big[C_{i}^{1}\big] - \big[C_{1}^{0}\big]\right) =
 \int_{P_{i}}^{Q_{i}} \frac{ 1 }{ 2 \pi \mathfrak{i} }\oint_{D_{k}} \omega
 = \int_{P_{i}}^{Q_{i}} \operatorname{res}_{D_{k}} \omega,
 \end{gather*}
 where $\omega$ is the symplectic form defined by~\eqref{eq:symp-form}.
 \end{itemize}

\begin{Proposition} For our canonical model of the $A_{2}^{(1)*}$ surface, the period map and the root variables $a_{i} = \chi(\alpha_{i})$ are given by
 \begin{alignat}{4}\label{eq:root-params}
 a_{0} &= b_{4} - b_{3}, &\qquad a_{1} &= b_{3} - b_{2}, &\qquad a_{2} &= b_{2} - b_{1}, &\qquad a_{3} &= b_{1} + b_{7},\\
 a_{4} &= b_{8} - b_{7}, &\qquad a_{5} &= b_{1} + b_{5}, &\qquad a_{6} &= b_{6} - b_{5}.\notag
 \end{alignat}
 This gives us the following parameterization by the root variables $a_{i}$ $($using $a_{i\cdots j}:= a_{i}+\cdots + a_{j})$:
 \begin{equation*}
 \left(\begin{matrix}
 b_{1} & b_{2} & b_{3} & b_{4}\\
 b_{5} & b_{6} & b_{7} & b_{8}
 \end{matrix}; \begin{matrix}
 f \\ g
 \end{matrix}\right) = \left(\begin{matrix}
 b_{4} - a_{012} & b_{4} - a_{01} & b_{4} - a_{0} & b_{4} \\
 a_{0125} - b_{4} & a_{01256} - b_{4} & a_{01233} - b_{4} & a_{01234} - b_{4}
 \end{matrix}; \begin{matrix}
 f \\ g
 \end{matrix}\right).
 \end{equation*}
\end{Proposition}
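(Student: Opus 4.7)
The plan is to evaluate $\chi(\alpha_{i})$ for each $i=0,\dots,6$ directly from the three-step recipe given immediately before the Proposition. For each simple root $\alpha_{i}$ listed in~\eqref{eq:a-roots}, I would (i) write $\alpha_{i}=[C_{i}^{1}]-[C_{i}^{0}]$ as a difference of classes of explicit effective divisors, (ii) read off the unique anti-canonical component $D_{k}$ with $D_{k}\bullet C_{i}^{1}=D_{k}\bullet C_{i}^{0}=1$ from the decomposition~\eqref{eq:d-roots}, and (iii) compute the one-dimensional residue integral from $P_{i}=D_{k}\cap C_{i}^{0}$ to $Q_{i}=D_{k}\cap C_{i}^{1}$ in the natural coordinate on $D_{k}$.

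I would dispose of the residue calculations once and for all by rewriting $\omega$ in the four standard affine charts via $F=1/f$, $G=1/g$ and extracting the pole along each component of the anti-canonical divisor. A short direct computation gives
\begin{gather*}
\operatorname{res}_{D_{0}}\omega = -df\big|_{g=-f},\qquad
\operatorname{res}_{D_{1}}\omega = -dg\big|_{F=0},\qquad
\operatorname{res}_{D_{2}}\omega = df\big|_{G=0},
\end{gather*}
with natural affine coordinates $f$ on $D_{0}$ and $D_{2}$ and $g$ on $D_{1}$. The locations of the points $p_{1},\dots,p_{8}$ on these three rational curves are already fixed by Fig.~\ref{fig:points-standard}.

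For the roots $\alpha_{0},\alpha_{1},\alpha_{2}$ (respectively $\alpha_{4}$, $\alpha_{6}$) the decomposition is already a difference of exceptional classes and the only anti-canonical component meeting both exceptional divisors is $D_{0}$ (respectively $D_{2}$, $D_{1}$). For $\alpha_{3}=\mathcal{H}_{f}-\mathcal{E}_{1}-\mathcal{E}_{7}$ and $\alpha_{5}=\mathcal{H}_{g}-\mathcal{E}_{1}-\mathcal{E}_{5}$ I would use the splittings $\alpha_{3}=(\mathcal{H}_{f}-\mathcal{E}_{1})-\mathcal{E}_{7}$ with $C_{3}^{1}$ the proper transform of the vertical line $f=b_{1}$ through $p_{1}$ (forcing $D_{k}=D_{2}$), and $\alpha_{5}=(\mathcal{H}_{g}-\mathcal{E}_{1})-\mathcal{E}_{5}$ with $C_{5}^{1}$ the proper transform of the horizontal line $g=-b_{1}$ through $p_{1}$ (forcing $D_{k}=D_{1}$). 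In every case the endpoints $P_{i}$, $Q_{i}$ then have manifestly known $f$- or $g$-coordinates from Fig.~\ref{fig:points-standard}, and each integral collapses to a single difference of these coordinates, reproducing the seven formulas of~\eqref{eq:root-params}. The explicit parameterization in the second display follows from~\eqref{eq:root-params} by solving for the $b_{i}$ in terms of $b_{4}$ and the $a_{i}$, a purely algebraic telescoping.

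The only real obstacle here is systematic sign bookkeeping: the three residues carry different signs depending on which affine chart one computes them in, and the integration convention runs $P_{i}\to Q_{i}$, not the other way around. Disposing of the three residues at the outset is what makes the seven subsequent evaluations of $\chi(\alpha_{i})$ essentially one line each.
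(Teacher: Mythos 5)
Your proposal is correct and follows essentially the same route as the paper's own proof: compute the residues of $\omega$ along $D_{0}$, $D_{1}$, $D_{2}$ once (your $\operatorname{res}_{D_{0}}\omega=-df$ agrees with the paper's $dg$ since $dg=-df$ on $f+g=0$), use the same splittings $\alpha_{3}=(\mathcal{H}_{f}-\mathcal{E}_{1})-\mathcal{E}_{7}$ and $\alpha_{5}=(\mathcal{H}_{g}-\mathcal{E}_{1})-\mathcal{E}_{5}$, and evaluate each $\chi(\alpha_{i})$ as a single coordinate difference, with the final parameterization obtained by inverting the linear relations. No gaps.
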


\begin{Remark} We see that $b_{4}$ is one free parameter (translation of the origin, the choice of~$b_{4}$ turns out to be particularly convenient for our example), the other parameter is the global scaling, that can be used to normalize
\begin{gather*}
d = \chi(\delta) = \chi(- \mathcal{K}_{\mathcal{X}}) = \chi(a_{0} + 2 a_{1} + 3 a_{2} + 2 a_{3} + a_{4} + 2 a_{5} + a_{6})\\
\hphantom{d = \chi(\delta) = \chi(- \mathcal{K}_{\mathcal{X}})}{} = b_{1} + b_{2} + b_{3} + b_{4} + b_{5} + b_{6} + b_{7} + b_{8}.
\end{gather*}
The usual normalization is to put $\chi(\delta) = 1$, and one can also ask the same for~$b_{4}$. We will not do that, but we will require that, when resolving the normalization ambiguity, both $\chi(\delta)$ and~$b_{4}$ are fixed~-- this ensures the group structure on the level of \emph{elementary birational maps}.
 \end{Remark}

\begin{proof} First, note that if we put $h=f+g$, then for the symplectic form $ \omega = \frac{df \wedge dg}{f+g}$ in \eqref{eq:symp-form} we get
 \begin{gather*}
 \operatorname{res}_{D_{0}} \omega = \operatorname{res}_{h=0}\frac{ dh\wedge dg }{ h } = dg, \qquad
 \operatorname{res}_{D_{1}} \omega = \operatorname{res}_{F=0}\frac{ - dF\wedge dg }{ F(1 + FG) } = -dg,\\
 \operatorname{res}_{D_{2}} \omega = \operatorname{res}_{G=0}\frac{ -df\wedge dG }{ G(fG + 1) } = df.
 \end{gather*}

Then
 \begin{gather*}
 \chi(\alpha_{0}) = \chi([E_{3}] - [E_{4}]) =\int_{p_{4}}^{p_{3}}\operatorname{res}_{D_{0}} \omega
 = \int_{-b_{4}}^{ - b_{3}}dg = b_{4} - b_{3} = a_{0},\\
 \chi(\alpha_{1}) = \chi([E_{2}] - [E_{3}]) = b_{3} - b_{2} = a_{1},\qquad
 \chi(\alpha_{2}) = \chi([E_{1}] - [E_{2}]) = b_{2} - b_{1} = a_{2},\\
 \chi(\alpha_{3}) = \chi([H_{f} - E_{1}] - [E_{7}]) =
 \int^{(f = b_{1})\cap D_{2}}_{p_{7}}\operatorname{res}_{D_{2}} \omega = \int^{b_{1}}_{ - b_{7}}df = b_{1} + b_{7} = a_{3},\\
 \chi(\alpha_{4}) = \chi([E_{7}] - [E_{8}]) = \int_{-b_{8}}^{-b_{7}} df = b_{8} - b_{7} = a_{4},\\
 \chi(\alpha_{5}) = \chi([H_{g} - E_{1}] - [E_{5}]) =
 \int^{(g = -b_{1})\cap D_{1}}_{b_{5}}\operatorname{res}_{D_{1}} \omega = \int^{-b_{1}}_{ b_{5}}-dg = b_{1} + b_{5} = a_{5},\\
 \chi(\alpha_{6}) = \chi([E_{5}] - [E_{6}]) = \int_{b_{5}}^{b_{6}} -dg = b_{6} - b_{5} = a_{6}.\tag*{\qed}
 \end{gather*}\renewcommand{\qed}{}
\end{proof}

\section[The structure of the symmetry group $\widetilde{W}\big(E_{6}^{(1)}\big)$]{The structure of the symmetry group $\boldsymbol{\widetilde{W}\big(E_{6}^{(1)}\big)}$} \label{sec:the_structure_of_the_symmetry_group_widetilde_w_left_e__6_1_right}

The next step in understanding the structure of difference Painlev\'e equations of type d-$P\big(A_{2}^{(1)*}\big)$ is to construct the birational representation of the symmetry group of the $A_{2}^{(1)*}$-Okamoto surface. This group is an extended affine Weyl group $\widetilde{W}\big(E_{6}^{(1)}\big) = \operatorname{Aut}\big(E_{6}^{(1)}\big) \ltimes W\big(E_{6}^{(1)}\big)$ that is a \emph{semi-direct product} of the usual affine Weyl group $W\big(E_{6}^{(1)}\big)$ and the group of Dynkin diagram automorphisms $\operatorname{Aut}\big(E_{6}^{(1)}\big) \simeq \operatorname{Aut}\big(A_{2}^{(1)}\big)$ that reflects ambiguities in the choice of the root bases.

\subsection[The affine Weyl group $W\big(E_{6}^{(1)}\big)$]{The affine Weyl group $\boldsymbol{W\big(E_{6}^{(1)}\big)}$}\label{sub:the_affine_weyl_group_w_left_e__6_1_right}

The affine Weyl group $W\big(E_{6}^{(1)}\big)$ is defined in terms of generators $w_{i} = w_{\alpha_{i}}$ and relations that are encoded by the affine Dynkin diagram $E_{6}^{(1)}$,
\begin{gather*}
 W\big(E_{6}^{(1)}\big) = W\left(\raisebox{-25pt}{\begin{tikzpicture}[
 elt/.style={circle,draw=black!100,thick, inner sep=0pt,minimum size=1.5mm}]
 \path (-1,0) node (a0) [elt] {}
 (-0.5,0) node (a1) [elt] {}
 ( 0,0) node (a2) [elt] {}
 ( 0.5,0) node (a3) [elt] {}
 ( 1,0) node (a4) [elt] {}
 ( 0,0.5) node (a5) [elt] {}
 ( 0,1) node (a6) [elt] {};
 \draw [black] (a0) -- (a1) -- (a2) -- (a3) -- (a4) (a2) -- (a5) -- (a6);
 \node at ($(a0.south) + (0,-0.2)$) {$\alpha_{0}$};
 \node at ($(a1.south) + (0,-0.2)$) {$\alpha_{1}$};
 \node at ($(a2.south) + (0,-0.2)$) {$\alpha_{2}$};
 \node at ($(a3.south) + (0,-0.2)$) {$\alpha_{3}$};
 \node at ($(a4.south) + (0,-0.2)$) {$\alpha_{4}$};
 \node at ($(a5.east) + (0.3,0)$) {$\alpha_{5}$};
 \node at ($(a6.east) + (0.3,0)$) {$\alpha_{6}$};
 \end{tikzpicture}} \right)\\
\hphantom{W\big(E_{6}^{(1)}\big)}{}
 =
\left\langle w_{0},\dots, w_{6}\left|
\begin{alignedat}{2}
w_{i}^{2} = e,\quad w_{i}\circ w_{j} &= w_{j}\circ w_{i}& &\text{ when
\raisebox{-0.08in}{\begin{tikzpicture}[
elt/.style={circle,draw=black!100,thick, inner sep=0pt,minimum size=1.5mm}]
\path  ( 0,0) node  (ai) [elt] {}
 ( 0.5,0)    node  (aj) [elt] {};
\draw [black] (ai) (aj);
 \node at ($(ai.south) + (0,-0.2)$)   {$\alpha_{i}$};
\node at ($(aj.south) + (0,-0.2)$) {$\alpha_{j}$};
\end{tikzpicture}}}\\
  w_{i}\circ w_{j}\circ w_{i} &= w_{j}\circ w_{i}\circ w_{j}& &\text{ when
\raisebox{-0.17in}{\begin{tikzpicture}[
elt/.style={circle,draw=black!100,thick, inner sep=0pt,minimum size=1.5mm}]
 \path  ( 0,0) node  (ai) [elt] {}
 ( 0.5,0)    node  (aj) [elt] {};
 \draw [black] (ai) -- (aj);
\node at ($(ai.south) + (0,-0.2)$)   {$\alpha_{i}$};
\node at ($(aj.south) + (0,-0.2)$) {$\alpha_{j}$};
\end{tikzpicture}}}
 \end{alignedat}\right.\right\rangle.
\end{gather*}

There is the natural action of this group on $\operatorname{Pic}(\mathcal{X})$ given by the reflection in the roots~$\alpha_{i}$,
\begin{gather*}
 w_{i}(\mathcal{C}) = w_{\alpha_{i}}(\mathcal{C}) = \mathcal{C} - 2
 \frac{\mathcal{C}\bullet \alpha_{i}}{\alpha_{i}\bullet \alpha_{i}}\alpha_{i}
 = \mathcal{C} + \left(\mathcal{C}\bullet \alpha_{i}\right) a_{i},\qquad \mathcal{C}\in \operatorname{Pic}(\mathcal{X}).
\end{gather*}
Note that each $w_{i}$ is a \emph{Cremona isometry} \cite{Dol:1983:WGACT}, i.e., it preserves the intersection form and the canonical class $\mathcal{K}_{\mathcal{X}}$, and it leaves the semigroup of effective classes invariant. We now extend the action of~$w_{i}$ from $\operatorname{Pic}(\mathcal{X})$ to elementary birational maps, also denoted by~$w_{i}$, of the family~$\mathcal{X}_{\mathbf{b}}$, thus constructing a birational representation of $W\big(E_{6}^{(1)}\big)$.

\begin{Remark}\label{rem:period-action}
In computing the birational representation, the following observation is very helpful. Let $w\in \widetilde{W}\big(E_{6}^{(1)}\big)$, and let $\eta\colon \mathcal{X}_{\mathbf{b}} \to \mathcal{X}_{\bar{\mathbf{b}}}$ be the corresponding mapping, i.e., $w = \eta_{*}$ and $w^{-1} = \eta^{*}$, where $\eta_{*}$ and $\eta^{*}$ are the induced push-forward and pull-back actions on the divisors (and hence on $\operatorname{Pic(\mathcal{X})}$) that are inverses of each other. Since $\eta$ is just a change of the blowdown structure that the period
map $\chi$ does not depend on, $\chi_{\mathcal{X}}(\alpha_{i}) = \chi_{\eta(\mathcal{X})}(\eta_{*}(\alpha_{i}))$. Thus, we can compute the evolution of the root variables directly from the action on $\operatorname{Pic}(\mathcal{X})$ via the formula
\begin{gather*}
 \bar{a}_{i} = \chi_{\eta(\mathcal{X})}(\bar{\alpha}_{i}) = \chi_{\mathcal{X}} (\eta^{*}(\bar{\alpha}_{i}))
 = \chi_{\mathcal{X}} \big(w^{-1}(\bar{\alpha}_{i})\big).
\end{gather*}
Thus \looseness=-1 the action of $\eta$ on the root variables is \emph{inverse} to the action of $w$ on the roots. This is not essential for the generating reflections, that are involutions, but it is important for composed maps.
\end{Remark}

\begin{Theorem}\label{thm:bir-weyl}
Reflections $w_{i}$ on $\operatorname{Pic}(\mathcal{X})$ are induced by the following elementary
 birational mappings, also denoted by $w_{i}$, on the family $\mathcal{X}_{\mathbf{b}}$. To ensure the group structure,
 we require that each map fixes $b_{4}$ and $\chi(\delta)$. We have $($using the notation $b_{i\cdots k} = b_{i} + \cdots + b_{k})$
 \begin{gather*}
 \left(\begin{matrix}
 {b}_{1} & {b}_{2} & {b}_{3} & {b}_{4}\\
 {b}_{5} & {b}_{6} & {b}_{7} & {b}_{8}
 \end{matrix}; \begin{matrix}
 f \\ g
 \end{matrix}\right) \overset{w_{0}}\longmapsto
 \left(\begin{matrix}
 b_{14} - b_{3} & b_{24} - b_{3} & b_{44} - b_{3} & b_{4} \\
 b_{53} - b_{4} & b_{63} - b_{4} & b_{73} - b_{4} & b_{83} - b_{4}
 \end{matrix}; \begin{matrix}
 f - b_{3} + b_{4} \\ g + b_{3} - b_{4}
 \end{matrix}\right), \\
 \left(\begin{matrix}
 {b}_{1} & {b}_{2} & {b}_{3} & {b}_{4}\\
 {b}_{5} & {b}_{6} & {b}_{7} & {b}_{8}
 \end{matrix}; \begin{matrix}
 f \\ g
 \end{matrix}\right) \overset{w_{1}}\longmapsto
 \left(\begin{matrix}
 b_{1} & b_{3} & b_{2} & b_{4} \\
 b_{5} & b_{6} & b_{7} & b_{8}
 \end{matrix}; \begin{matrix}
 f \\ g
 \end{matrix}\right), \\
 \left(\begin{matrix}
 {b}_{1} & {b}_{2} & {b}_{3} & {b}_{4}\\
 {b}_{5} & {b}_{6} & {b}_{7} & {b}_{8}
 \end{matrix}; \begin{matrix}
 f \\ g
 \end{matrix}\right) \overset{w_{2}}\longmapsto
 \left(\begin{matrix}
 {b}_{2} & {b}_{1} & {b}_{3} & {b}_{4}\\
 {b}_{5} & {b}_{6} & {b}_{7} & {b}_{8}
 \end{matrix}; \begin{matrix}
 f \\
 g
 \end{matrix},\right), \\
 \left(\begin{matrix}
 {b}_{1} & {b}_{2} & {b}_{3} & {b}_{4}\\
 {b}_{5} & {b}_{6} & {b}_{7} & {b}_{8}
 \end{matrix}; \begin{matrix}
 f \\ g
 \end{matrix}\right) \overset{w_{3}}\longmapsto
 \left(\begin{matrix}
 -{b}_{7} & {b}_{2} & {b}_{3} & {b}_{4}\\
 {b}_{157} & {b}_{167} & -{b}_{1} & {b}_{8}
 \end{matrix}; \begin{matrix}
 f\\ \frac{(f + b_{7}) (g + b_{1}) }{ f- b_{1} } + b_{7}
 \end{matrix}\right), \\
 \left(\begin{matrix}
 {b}_{1} & {b}_{2} & {b}_{3} & {b}_{4}\\
 {b}_{5} & {b}_{6} & {b}_{7} & {b}_{8}
 \end{matrix}; \begin{matrix}
 f \\ g
 \end{matrix}\right) \overset{w_{4}}\longmapsto
 \left(\begin{matrix}
 b_{1} & b_{2} & b_{3} & b_{4} \\
 b_{5} & b_{6} & b_{8} & b_{7}
 \end{matrix}; \begin{matrix}
 f \\ g
 \end{matrix}\right), \\
 \left(\begin{matrix}
 {b}_{1} & {b}_{2} & {b}_{3} & {b}_{4}\\
 {b}_{5} & {b}_{6} & {b}_{7} & {b}_{8}
 \end{matrix}; \begin{matrix}
 f \\ g
 \end{matrix}\right) \overset{w_{5}}\longmapsto
 \left(\begin{matrix}
 -{b}_{5} & {b}_{2} & {b}_{3} & {b}_{4}\\
 -{b}_{1} & {b}_{6} & {b}_{157} & {b}_{158}
 \end{matrix}; \begin{matrix}
 \frac{ (f-b_{1})(g-b_{5}) }{ g + b_{1} } - b_{5} \\ g
 \end{matrix}\right), \\
 \left(\begin{matrix}
 {b}_{1} & {b}_{2} & {b}_{3} & {b}_{4}\\
 {b}_{5} & {b}_{6} & {b}_{7} & {b}_{8}
 \end{matrix}; \begin{matrix}
 f \\ g
 \end{matrix}\right) \overset{w_{6}}\longmapsto
 \left(\begin{matrix}
 b_{1} & b_{2} & b_{3} & b_{4} \\
 b_{6} & b_{5} & b_{7} & b_{8}
 \end{matrix}; \begin{matrix}
 f \\ g
 \end{matrix}\right).
 \end{gather*}
\end{Theorem}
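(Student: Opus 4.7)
The plan is to verify each of the seven generators $w_i$ separately in two complementary parts: (i) that the stated birational map induces on $\operatorname{Pic}(\mathcal{X})$ the reflection $w_i(\mathcal{C}) = \mathcal{C} + (\mathcal{C}\bullet\alpha_i)\alpha_i$ on the generators $\mathcal{H}_f, \mathcal{H}_g, \mathcal{E}_1,\dots,\mathcal{E}_8$; and (ii) that it sends the blowup configuration of $\mathcal{X}_{\mathbf{b}}$ to that of $\mathcal{X}_{\bar{\mathbf{b}}}$ with the listed parameter evolution, the latter being cross-checkable against Remark~\ref{rem:period-action} applied to the root variables of~\eqref{eq:root-params}.

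The four ``diagram-automorphism'' reflections $w_1, w_2, w_4, w_6$ are essentially trivial: their Picard action simply swaps the pairs of exceptional classes $(\mathcal{E}_2,\mathcal{E}_3)$, $(\mathcal{E}_1,\mathcal{E}_2)$, $(\mathcal{E}_7,\mathcal{E}_8)$, $(\mathcal{E}_5,\mathcal{E}_6)$; the birational map on $\mathbb{P}^{1}\times\mathbb{P}^{1}$ is the identity, and the parameter permutation reduces to swapping the two blowup points sitting on the same component of $-\mathcal{K}_{\mathcal{X}}$. The reflection $w_0$ is nearly as easy: although $\alpha_0 = \mathcal{E}_3 - \mathcal{E}_4$ naively swaps $p_3$ and $p_4$ on $D_0$, the normalization requirement that $b_4$ be fixed forces the swap to be implemented as the $D_0$-preserving translation $(f,g)\mapsto(f+(b_4-b_3), g-(b_4-b_3))$; one then verifies that each $p_i$ lands at the correct $\bar{p}_{\sigma(i)}$.

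The real content lies in $w_3$ and $w_5$. Since $\alpha_3 = \mathcal{H}_f - \mathcal{E}_1 - \mathcal{E}_7$ and $\alpha_5 = \mathcal{H}_g - \mathcal{E}_1 - \mathcal{E}_5$ are classes of generically virtual $(-2)$-curves through the pairs of points $(p_1,p_7)$ and $(p_1,p_5)$, the induced reflection $w_3$ acts as $w_3(\mathcal{E}_1) = \mathcal{H}_f - \mathcal{E}_7$, $w_3(\mathcal{E}_7) = \mathcal{H}_f - \mathcal{E}_1$, $w_3(\mathcal{H}_g) = \mathcal{H}_g + \alpha_3$, with the remaining classes fixed; and similarly for $w_5$ under $f\leftrightarrow g$. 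This forces the ansatz for $w_3$ to preserve $f$ and to act on $g$ as a fractional linear transformation with coefficients rational in $f$, with a simple pole at $f = b_1$ (so that $E_1$ is contracted to the vertical line $\bar{f} = b_1 = -\bar{b}_7$) and a zero along the horizontal line through $p_7$ (so that $g = \infty$ is blown up at $p_7$ onto the new exceptional divisor $\bar{E}_1$). These constraints, together with the requirement that $b_4$ and $\chi(\delta)$ be preserved, uniquely produce the stated formula $\bar{g} = (f+b_7)(g+b_1)/(f-b_1) + b_7$, and an analogous argument handles $w_5$ by symmetry.

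The main obstacle is the careful local verification for $w_3$ (and hence $w_5$): one must resolve the indeterminacies at $p_1$ and $p_7$ via the appropriate blowups, and then show that in the resolved map each of the eight blowup points $p_i$ of $\mathcal{X}_{\mathbf{b}}$ lands at the correct $\bar{p}_{\sigma(i)}$ of $\mathcal{X}_{\bar{\mathbf{b}}}$. The key local computations are: substituting $f=\infty$ gives $\bar{g}|_{f=\infty} = g + b_1 + b_7$, so that $p_5, p_6$ go to $\bar{p}_5, \bar{p}_6$ with shifted $g$-coordinates $b_{157}, b_{167}$; substituting $g = -f = -b_i$ for $i=2,3,4$ gives $\bar{g} = -b_i$, so those diagonal points are fixed; $p_8$ at $g=\infty$ maps to itself since $\bar{g}=\infty$ there; and the blowup charts near $p_1$ and $p_7$ show that $E_1$ and $E_7$ are contracted to the lines carrying the new $\bar{E}_7$ and $\bar{E}_1$. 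Once these local checks are complete, the action on $\operatorname{Pic}(\mathcal{X})$ follows by tracking proper transforms, and the root variable evolution agrees with the direct computation from~\eqref{eq:root-params}; all further checks are routine.
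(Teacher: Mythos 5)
Your proposal follows essentially the same route as the paper: $w_{1}$, $w_{2}$, $w_{4}$, $w_{6}$ are realized by the identity map with the obvious swap of blowup parameters, $w_{0}$ by the swap of $b_{3}$, $b_{4}$ composed with the gauge translation that keeps $b_{4}$ fixed, and $w_{3}$ (with $w_{5}$ by symmetry) by an ansatz dictated by the action on $\operatorname{Pic}(\mathcal{X})$ ($\bar{f}$ M\"obius-equivalent to $f$, $\bar{g}$ a coordinate on the pencil of $(1,1)$-curves through $p_{1}$ and $p_{7}$), with the remaining constants pinned down by the preserved classes $\mathcal{E}_{i}$ and the root-variable evolution obtained via Remark~\ref{rem:period-action}; your additional local blowup checks merely make explicit what the paper treats as routine verification. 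One small correction to your geometric description of $w_{3}$: the horizontal line $g=\infty$ is the component $D_{2}$ and is preserved, while it is the \emph{vertical} line $f=-b_{7}$ through $p_{7}$ that is contracted to $\bar{p}_{1}$ (becoming $\bar{E}_{1}$), and dually the vertical line $f=b_{1}$ through $p_{1}$ is contracted to $\bar{p}_{7}$, whereas $E_{1}$ and $E_{7}$ are mapped \emph{onto} (not contracted to) the vertical lines through $\bar{p}_{7}$ and $\bar{p}_{1}$, in the classes $\mathcal{H}_{f}-\mathcal{E}_{7}$ and $\mathcal{H}_{f}-\mathcal{E}_{1}$ respectively; carrying out the local checks you list would produce exactly this picture, so the slip does not affect the argument.
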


\begin{proof} Since $\alpha_{0} = \mathcal{E}_{3} - \mathcal{E}_{4}$, the action of $w_{0}$ on $\operatorname{Pic}(\mathcal{X})$ simply interchanges the corresponding divisors, $\mathcal{E}_{3}\leftrightarrow \mathcal{E}_{4}$, which is the same as interchanging the order of blowups or, equivalently, swapping the parameters $b_{3}$ and $b_{4}$. However, since our normalization should fix $b_{4}$, we need to use the gauge action to ensure that $b_{4}$ is fixed:
\begin{gather*}
 \left(\begin{matrix}
 b_{1} & b_{2} & b_{4} & b_{3} \\
 b_{5} & b_{6} & b_{7} & b_{8}
 \end{matrix}; \begin{matrix}
 f \\ g
 \end{matrix}\right)\\
\qquad{} \sim
 \left(\begin{matrix}
 b_{1} - b_{3} + b_{4} & b_{2} - b_{3} + b_{4} & b_{4} - b_{3} + b_{4} & b_{4} \\
 b_{5} + b_{3} - b_{4} & b_{6} + b_{3} - b_{4} & b_{7} + b_{3} - b_{4} & b_{8} + b_{3} - b_{4}
 \end{matrix}; \begin{matrix}
 f - b_{3} + b_{4} \\ g + b_{3} - b_{4}
 \end{matrix}\right).
\end{gather*}

The mapping $w_{1}$ corresponding to the root $\alpha_{1} = \mathcal{E}_{2} - \mathcal{E}_{3}$ interchanges the corresponding divisors, $\mathcal{E}_{2}\leftrightarrow \mathcal{E}_{3}$, which is the same as interchanging the order of blowups or, equivalently, swapping the parameters $b_{2}$ and $b_{3}$. The mappings $w_{2}$, $w_{4}$, and $w_{6}$ are similar.

The two remaining mappings corresponding to the root $\alpha_{3} = \mathcal{H}_{f} - \mathcal{E}_{1} - \mathcal{E}_{7}$ and the root $\alpha_{5} = \mathcal{H}_{g} - \mathcal{E}_{1} - \mathcal{E}_{5}$ are the most interesting ones. The action of $w_{3}$ on $\operatorname{Pic}(\mathcal{X})$ is given by
\begin{gather*}
 w_{3}(\mathcal{H}_{f}) = \mathcal{H}_{f}, \qquad
 w_{3}(\mathcal{H}_{g}) = \mathcal{H}_{f} + \mathcal{H}_{g} - \mathcal{E}_{1} - \mathcal{E}_{7},\qquad
 w_{3}(\mathcal{E}_{1}) = \mathcal{H}_{f} - \mathcal{E}_{7},\\
 w_{3}(\mathcal{E}_{7}) = \mathcal{H}_{f} - \mathcal{E}_{1}, \qquad
 w_{3}(\mathcal{E}_{i}) = \mathcal{E}_{i},\quad i\neq 1,7.
\end{gather*}
Thus, we are looking for a mapping $w_{3}\colon \mathcal{X}_{\mathbf{b}}\to \mathcal{X}_{\bar{\mathbf{b}}}$ that is given in the affine
chart $(f,g)$ by a~formula $w_{3}(f,g) = (\bar{f},\bar{g})$ so that
\begin{gather*}
w_{3}^{*}(\mathcal{H}_{\bar{f}}) = \mathcal{H}_{f},\qquad w_{3}^{*}(\mathcal{H}_{\bar{g}}) = \mathcal{H}_{f} + \mathcal{H}_{g} - \mathcal{E}_{1} - \mathcal{E}_{7}.
\end{gather*}
Thus, up to M\"obius transformations, $\bar{f}$ coincides with $f$ and $\bar{g}$ is a coordinate on a pencil of $(1,1)$-curves
passing through points $p_{1}(b_{1},-b_{1})$ and $p_{7}(-b_{7},\infty)$:
\begin{gather*}
 |H_{\bar{g}}| = \big\{A fg + Bf + C g + D = 0 \,|\, {-}Ab_{1}^{2} + (B-C)b_{1} + D = - Ab_{7} + C = 0\big\} \\
\hphantom{|H_{\bar{g}}|}{} = \{A\left((f + b_{7}) g + b_{1}(b_{1} + b_{7})\right) + B(f - b_{1}) = 0\}.
\end{gather*}
Accounting for the M\"obius transformations, we get
\begin{gather*}
\bar{f} = \frac{ A f + B }{ C f + D },\qquad
\bar{g} = \frac{ K (fg + b_{7} g + b_{1}^{2} + b_{1} b_{7}) + L (f - b_{1}) }{M (fg + b_{7} g + b_{1}^{2} + b_{1} b_{7}) + N (f - b_{1}) },
\end{gather*}
where $A,\dots,N$ are some arbitrary constants that can be determined from the action on the divisors. This can be computed directly, but it is more convenient to use Remark~\ref{rem:period-action}. We see that under the action of $w_{3}$, $\bar{a}_{2} = a_{2} + a_{3}$, $\bar{a}_{3} = - a_{3}$, $\bar{a}_{4} = a_{3} + a_{4}$, and $\bar{a}_{i} = a_{i}$ otherwise. Thus, from \eqref{eq:root-params} we get the following evolution of the coordinates $b_{i}$ of the blowup points:
\begin{gather}\label{eq:param-evol}
 \left(\begin{matrix}
 {b}_{1} & {b}_{2} & {b}_{3} & {b}_{4}\\
 {b}_{5} & {b}_{6} & {b}_{7} & {b}_{8}
 \end{matrix}\right)\overset{w_{3}}\longmapsto
\left(\begin{matrix}
 -{b}_{7} & {b}_{2} & {b}_{3} & {b}_{4}\\
 {b}_{5} + b_{1} + b_{7} & {b}_{6} + b_{1} + b_{7} & -{b}_{1} & {b}_{8}
 \end{matrix}\right).
\end{gather}

Since $w_{3}(\mathcal{E}_{5}) = \mathcal{E}_{5}$, $(\bar{f},\bar{g})(\infty,b_{5}) = (\infty,\bar{b}_{5})$, we see that $C=0$. Similarly, from $w_{3}(\mathcal{E}_{8}) = \mathcal{E}_{8}$ we see that $(\bar{f},\bar{g})(-b_{8},\infty) = (-\bar{b}_{8},\infty)$, and so $M=0$. Thus,
\begin{gather*}
\bar{f} = A f + B ,\qquad \bar{g} = K \frac{ (f + b_{7}) g + b_{1} (b_{1} + b_{7}) }{ f - b_{1} } + L.
\end{gather*}
Finally, from $w_{3}(\mathcal{E}_{i}) = \mathcal{E}_{i}$ for $i=2,3$ and \eqref{eq:param-evol} we immediately see that $A = 1$, $B = 0$, $K=1$, $L=b_{1} + b_{7}$, and we get the required formulae for the mapping $w_{3}$:
\begin{gather*} \bar{f} = f ,\qquad \bar{g} = \frac{ (f + b_{7}) g + b_{1} (b_{1} + b_{7}) }{f - b_{1} } + b_{1} + b_{7} = \frac{(f + b_{7}) (g + b_{1}) }{ f- b_{1} } + b_{7}.
\end{gather*}
The mapping $w_{5}$ is computed in the same way.
\end{proof}

\subsection{The group of Dynkin diagram automorphisms}\label{sub:the_group_of_dynkin_diagram_automorphisms}
It is easy to see, especially looking at the Dynkin diagram $A_{2}^{(1)}$, that the group of Dynkin diagram automorphisms is isomorphic to the usual \emph{dihedral group} $\mathbb{D}_{3}= \{e, m_{0}, m_{1}, m_{2}, r, r^{2}\} = \langle m_{0}, r \,|\, m_{0}^{2} = r^{3} = e,
m_{0} r = r^{2} m_{0}\rangle$ of the symmetries of a triangle:
\begin{equation*}
 \operatorname{Aut}\big(E_{6}^{(1)}\big) = \operatorname{Aut}\left(\raisebox{-25pt}{\begin{tikzpicture}[
 elt/.style={circle,draw=black!100,thick, inner sep=0pt,minimum size=1.5mm}]
 \path (-1,0) node (a0) [elt] {}
 (-0.5,0) node (a1) [elt] {}
 ( 0,0) node (a2) [elt] {}
 ( 0.5,0) node (a3) [elt] {}
 ( 1,0) node (a4) [elt] {}
 ( 0,0.5) node (a5) [elt] {}
 ( 0,1) node (a6) [elt] {};
 \draw [black] (a0) -- (a1) -- (a2) -- (a3) -- (a4) (a2) -- (a5) -- (a6);
 \node at ($(a0.south) + (0,-0.2)$) {$\alpha_{0}$};
 \node at ($(a1.south) + (0,-0.2)$) {$\alpha_{1}$};
 \node at ($(a2.south) + (0,-0.2)$) {$\alpha_{2}$};
 \node at ($(a3.south) + (0,-0.2)$) {$\alpha_{3}$};
 \node at ($(a4.south) + (0,-0.2)$) {$\alpha_{4}$};
 \node at ($(a5.east) + (0.3,0)$) {$\alpha_{5}$};
 \node at ($(a6.east) + (0.3,0)$) {$\alpha_{6}$};
 \end{tikzpicture}} \right) =
 \operatorname{Aut}\big(A_{2}^{(1)}\big) = \operatorname{Aut}\left(\raisebox{-25pt}{\begin{tikzpicture}[
 elt/.style={circle,draw=black!100,thick, inner sep=0pt,minimum size=1.5mm}]
 \path (-0.7,0) node (d0) [elt] {}
 ( 0.7,0) node (d1) [elt] {}
 ( 0,1.2) node (d2) [elt] {};
 \draw [black] (d0) -- (d1) -- (d2) -- (d0);
 \node at ($(d0.south west) + (-0.2,0)$) {$\delta_{0}$};
 \node at ($(d1.south east) + (0.2,0)$) {$\delta_{1}$};
 \node at ($(d2.north east) + (0.3,0)$) {$\delta_{2}$};
 \end{tikzpicture}} \right) = \mathbb{D}_{3}.
\end{equation*}

\begin{Proposition}\label{prop:dyn-auto} The action of $\mathbb{D}_{3}$ on the surface sub-lattice $\operatorname{Span}_{\mathbb{Z}}\{\delta_{i}\}$, on the symmetry sub-lattice $\operatorname{Span}_{\mathbb{Z}}\{\alpha_{i}\}$, and on the Picard lattice $\operatorname{Pic}(\mathcal{X})$, in the cycle notation for permutations, is given by
 \begin{align*}
 m_{0} &= (\delta_{1} \delta_{2})
 = (\alpha_{3} \alpha_{5}) (\alpha_{4} \alpha_{6}),\\
 &\qquad \begin{aligned}
 \mathcal{H}_{f}&\to \mathcal{H}_{g}, & \mathcal{E}_{1}&\to \mathcal{E}_{1}, &
 \mathcal{E}_{3}&\to \mathcal{E}_{3}, & \mathcal{E}_{5}&\to \mathcal{E}_{7}, &
 \mathcal{E}_{7}&\to \mathcal{E}_{5}, \\
 \mathcal{H}_{g}&\to \mathcal{H}_{f}, & \mathcal{E}_{2}&\to \mathcal{E}_{2}, &
 \mathcal{E}_{4}&\to \mathcal{E}_{4}, & \mathcal{E}_{6}&\to \mathcal{E}_{8}, &
 \mathcal{E}_{8}&\to \mathcal{E}_{6};
 \end{aligned}\\
 m_{1} &= (\delta_{0} \delta_{2})
 = (\alpha_{0} \alpha_{4}) (\alpha_{1} \alpha_{3}),\\
 &\qquad \begin{aligned}
 \mathcal{H}_{f}&\to \mathcal{H}_{f}, &
 \mathcal{E}_{1}&\to \mathcal{H}_{f}- \mathcal{E}_{2}, &
 \mathcal{E}_{3}&\to \mathcal{E}_{7}, &
 \mathcal{E}_{5}&\to \mathcal{E}_{5}, &
 \mathcal{E}_{7}&\to \mathcal{E}_{3}, \\
 \mathcal{H}_{g}&\to \mathcal{H}_{f} + \mathcal{H}_{g} - \mathcal{E}_{1} - \mathcal{E}_{2}, &
 \mathcal{E}_{2}&\to \mathcal{H}_{f} - \mathcal{E}_{1}, &
 \mathcal{E}_{4}&\to \mathcal{E}_{8}, &
 \mathcal{E}_{6}&\to \mathcal{E}_{6}, &
 \mathcal{E}_{8}&\to \mathcal{E}_{4};
 \end{aligned} \\
 m_{2} &= (\delta_{0} \delta_{1})
 = (\alpha_{0} \alpha_{6}) (\alpha_{1} \alpha_{5}),\\
 &\qquad \begin{aligned}
 \mathcal{H}_{f}&\to \mathcal{H}_{f} + \mathcal{H}_{g} - \mathcal{E}_{1} - \mathcal{E}_{2}, &
 \mathcal{E}_{1}&\to \mathcal{H}_{g}- \mathcal{E}_{2}, &
 \mathcal{E}_{3}&\to \mathcal{E}_{5}, &
 \mathcal{E}_{5}&\to \mathcal{E}_{3}, &
 \mathcal{E}_{7}&\to \mathcal{E}_{7}, \\
 \mathcal{H}_{g}&\to \mathcal{H}_{g}, &
 \mathcal{E}_{2}&\to \mathcal{H}_{g} - \mathcal{E}_{1}, &
 \mathcal{E}_{4}&\to \mathcal{E}_{6}, &
 \mathcal{E}_{6}&\to \mathcal{E}_{4}, &
 \mathcal{E}_{8}&\to \mathcal{E}_{8};
 \end{aligned}\\
 r &= (\delta_{0} \delta_{1} \delta_{2})
 = (\alpha_{0} \alpha_{6} \alpha_{4}) (\alpha_{1} \alpha_{5} \alpha_{3}),\\
 &\qquad \begin{aligned}
 \mathcal{H}_{f}&\to \mathcal{H}_{g}, &
 \mathcal{E}_{1}&\to \mathcal{H}_{g}- \mathcal{E}_{2}, &
 \mathcal{E}_{3}&\to \mathcal{E}_{5}, &
 \mathcal{E}_{5}&\to \mathcal{E}_{7}, &
 \mathcal{E}_{7}&\to \mathcal{E}_{3}, \\
 \mathcal{H}_{g}&\to \mathcal{H}_{f} + \mathcal{H}_{g} - \mathcal{E}_{1} - \mathcal{E}_{2}, &
 \mathcal{E}_{2}&\to \mathcal{H}_{g} - \mathcal{E}_{1}, &
 \mathcal{E}_{4}&\to \mathcal{E}_{6}, &
 \mathcal{E}_{6}&\to \mathcal{E}_{8}, &
 \mathcal{E}_{8}&\to \mathcal{E}_{4};
 \end{aligned}\\
 r^{2} &= (\delta_{0} \delta_{2} \delta_{1})
 = (\alpha_{0} \alpha_{4} \alpha_{6}) (\alpha_{1} \alpha_{3} \alpha_{5}),\\
 &\qquad \begin{aligned}
 \mathcal{H}_{f}&\to \mathcal{H}_{f} + \mathcal{H}_{g} - \mathcal{E}_{1} - \mathcal{E}_{2}, &
 \mathcal{E}_{1}&\to \mathcal{H}_{f}- \mathcal{E}_{2}, &
 \mathcal{E}_{3}&\to \mathcal{E}_{7}, &
 \mathcal{E}_{5}&\to \mathcal{E}_{3}, &
 \mathcal{E}_{7}&\to \mathcal{E}_{5}, \\
 \mathcal{H}_{g}&\to \mathcal{H}_{f} , &
 \mathcal{E}_{2}&\to \mathcal{H}_{f} - \mathcal{E}_{1}, &
 \mathcal{E}_{4}&\to \mathcal{E}_{8}, &
 \mathcal{E}_{6}&\to \mathcal{E}_{4}, &
 \mathcal{E}_{8}&\to \mathcal{E}_{6}.
 \end{aligned}
 \end{align*}
\end{Proposition}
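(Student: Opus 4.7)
The approach exploits the minimal presentation $\mathbb{D}_{3} = \langle m_{0}, r \mid m_{0}^{2} = r^{3} = (m_{0}r)^{2} = e\rangle$, so it suffices to (i)~establish the action of the two generators $m_{0}$ and $r$ on $\operatorname{Pic}(\mathcal{X})$, (ii)~verify the defining relations to ensure that we indeed obtain a $\mathbb{D}_{3}$-action, and then (iii)~recover the remaining four elements as $m_{1} = m_{0}\circ r$, $m_{2} = r\circ m_{0}$, and $r^{2}$ by composition.

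For the involution $m_{0}$, which swaps $\delta_{1}$ and $\delta_{2}$ while fixing $\delta_{0}$, the natural realization comes from the obvious symmetry $f\leftrightarrow g$ of the canonical model of Fig.~\ref{fig:points-standard}: this swap interchanges the vertical and horizontal components $D_{1}\leftrightarrow D_{2}$, exchanges the pairs of blowup points $\{p_{5},p_{6}\}\leftrightarrow \{p_{7},p_{8}\}$, and preserves the configuration $\{p_{1},\dots,p_{4}\}$ on $D_{0} = \{f+g=0\}$ as a set. On the Picard lattice this yields $\mathcal{H}_{f}\leftrightarrow\mathcal{H}_{g}$, $\mathcal{E}_{5}\leftrightarrow\mathcal{E}_{7}$, $\mathcal{E}_{6}\leftrightarrow\mathcal{E}_{8}$, and $\mathcal{E}_{i}$ fixed for $i = 1,\dots,4$; substituting into~\eqref{eq:a-roots} one reads off the claimed permutation $m_{0} = (\alpha_{3}\alpha_{5})(\alpha_{4}\alpha_{6})$ of the symmetry roots.

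The 3-cycle $r$ is the harder case, since no M\"obius symmetry of $\mathbb{P}^{1}\times\mathbb{P}^{1}$ realizes it; it is a genuine Cremona-type isometry of $\operatorname{Pic}(\mathcal{X})$. The plan is to pin down a candidate by imposing three natural constraints: $r$ preserves the intersection form and $\mathcal{K}_{\mathcal{X}}$, it cyclically permutes the surface roots $\delta_{0}\to \delta_{1}\to \delta_{2}\to \delta_{0}$, and it induces the rotation $(\alpha_{0}\alpha_{6}\alpha_{4})(\alpha_{1}\alpha_{5}\alpha_{3})$ of the $E_{6}^{(1)}$ Dynkin diagram (with $\alpha_{2}$ fixed) corresponding to the rotation of $A_{2}^{(1)}$. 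Together with the cyclic assignment on exceptional classes not appearing in the $\alpha_{i}$, these constraints determine the claimed formulas $r(\mathcal{H}_{f}) = \mathcal{H}_{g}$, $r(\mathcal{H}_{g}) = \mathcal{H}_{f}+\mathcal{H}_{g}-\mathcal{E}_{1}-\mathcal{E}_{2}$, $r(\mathcal{E}_{1}) = \mathcal{H}_{g}-\mathcal{E}_{2}$, $r(\mathcal{E}_{2}) = \mathcal{H}_{g}-\mathcal{E}_{1}$ and $\mathcal{E}_{3}\to\mathcal{E}_{5}\to\mathcal{E}_{7}\to\mathcal{E}_{3}$, $\mathcal{E}_{4}\to\mathcal{E}_{6}\to\mathcal{E}_{8}\to\mathcal{E}_{4}$ uniquely. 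Verification is then direct: one checks $r(\delta_{i})$ for $i=0,1,2$ using~\eqref{eq:d-roots}, $r(\alpha_{j})$ for $j=0,\dots,6$ using~\eqref{eq:a-roots}, and preservation of the intersection pairings $\mathcal{H}_{f}\bullet\mathcal{H}_{g} = 1$, $\mathcal{E}_{i}\bullet\mathcal{E}_{j} = -\delta_{ij}$ on the basis.

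Finally, composing on the basis $\{\mathcal{H}_{f},\mathcal{H}_{g},\mathcal{E}_{1},\dots,\mathcal{E}_{8}\}$ yields the explicit formulas for $m_{1} = m_{0}\circ r$, $m_{2} = r\circ m_{0}$, and $r^{2} = r\circ r$ stated in the proposition, and the same kind of computation confirms the relations $m_{0}^{2} = r^{3} = (m_{0}r)^{2} = e$, so that the assignment is a bona fide representation of $\mathbb{D}_{3}$. The main obstacle is purely bookkeeping: the nontrivial Cremona-like action of $r$ (and hence of $m_{1}$, $m_{2}$) on the classes $\mathcal{H}_{f}, \mathcal{H}_{g}, \mathcal{E}_{1}, \mathcal{E}_{2}$ means that one must carefully track successive substitutions when composing the generators, although the computation itself is entirely linear-algebraic and deterministic once the action of $m_{0}$ and $r$ is in hand.
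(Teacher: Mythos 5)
Your proposal is correct, and at bottom it rests on the same elementary lattice verification as the paper, but it is organized differently. The paper matches the $A_{2}^{(1)}$ and $E_{6}^{(1)}$ diagrams and then checks each of the five nontrivial automorphisms separately: for each one it writes down a candidate action on the basis $\{\mathcal{H}_{f},\mathcal{H}_{g},\mathcal{E}_{1},\dots,\mathcal{E}_{8}\}$ and confirms, using \eqref{eq:d-roots} and \eqref{eq:a-roots}, that it induces the required permutations of the $\delta_{i}$ and $\alpha_{j}$ (only $m_{0}$ is done explicitly, the other cases being ``similar''). You instead verify only the two generators $m_{0}$ and $r$ and recover $m_{1}=m_{0}\circ r$, $m_{2}=r\circ m_{0}$ and $r^{2}$ by composition, checking the relations $m_{0}^{2}=r^{3}=(m_{0}r)^{2}=e$; I checked that these compositions do reproduce exactly the formulas in the statement, so the reduction is sound. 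What your route buys is less casework and an explicit confirmation that the assignment is a genuine $\mathbb{D}_{3}$-representation (the paper leaves this implicit); what it costs is the bookkeeping of pushing the non-monomial action of $r$ on $\mathcal{H}_{f},\mathcal{H}_{g},\mathcal{E}_{1},\mathcal{E}_{2}$ through the compositions.

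Two minor caveats, neither fatal. First, your uniqueness claim for $r$ is asserted rather than proved; it is in fact true (the $\alpha_{j}$ and $\delta_{i}$ span $\delta^{\perp}\otimes\mathbb{Q}$, which has corank one in $\operatorname{Pic}(\mathcal{X})\otimes\mathbb{Q}$, and any remaining ambiguity is a multiple of $\delta$ that is killed by preservation of self-intersection on a class of nonzero degree against $\delta$), but it is also not needed: the proposition only requires exhibiting the stated isometries and verifying their effect on the roots, which is exactly the direct check you describe. Second, the phrase ``exceptional classes not appearing in the $\alpha_{i}$'' is vacuous, since every $\mathcal{E}_{i}$ occurs in some $\alpha_{j}$ of \eqref{eq:a-roots}; and the birational map realizing $m_{0}$ on the family $\mathcal{X}_{\mathbf{b}}$ is $(f,g)\mapsto(-g,-f)$ rather than the plain swap $f\leftrightarrow g$ (see the theorem following this proposition), although for the purely lattice-theoretic statement being proved this distinction is immaterial.
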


\begin{proof}The proof is immediate after matching the Dynkin diagrams $A_{2}^{(1)}$ and $E_{6}^{(1)}$ as follows:
 \begin{equation*}
 \begin{tikzpicture}[
 elt/.style={circle,draw=black!100,thick, inner sep=0pt,minimum size=1.5mm},scale=1.5,baseline=0.3in]
 \path (-0.7,0) node (d0) [elt] {}
 ( 0.7,0) node (d1) [elt] {}
 ( 0,1.2) node (d2) [elt] {};
 \draw [black] (d0) -- (d1) -- (d2) -- (d0);
 \node at ($(d0.south west) + (-0.2,0)$) {$\delta_{0}$};
 \node at ($(d1.south east) + (0.2,0)$) {$\delta_{1}$};
 \node at ($(d2.north east) + (0.3,0)$) {$\delta_{2}$};
 \end{tikzpicture} \qquad \Longleftrightarrow \qquad
 \begin{tikzpicture}[
 elt/.style={circle,draw=black!100,thick, inner sep=0pt,minimum size=1.5mm},scale=0.9,baseline=0.1in]
 \path (-2,-1) node (a6) [elt] {}
 (-1,-0.5) node (a5) [elt] {}
 ( 0,0) node (a2) [elt] {}
 ( 1,-0.5) node (a1) [elt] {}
 ( 2,-1) node (a0) [elt] {}
 ( 0,1) node (a3) [elt] {}
 ( 0,2) node (a4) [elt] {};
 \draw [black] (a0) -- (a1) -- (a2) -- (a3) -- (a4) (a2) -- (a5) -- (a6);
 \node at ($(a6.west) + (-0.35,0.1)$) {$\alpha_{6}$};
 \node at ($(a5.west) + (-0.35,0.1)$) {$\alpha_{5}$};
 \node at ($(a2.east) + (0.3,0.1)$) {$\alpha_{2}$};
 \node at ($(a1.east) + (0.3,0.1)$) {$\alpha_{1}$};
 \node at ($(a0.east) + (0.3,0.1)$) {$\alpha_{0}$};
 \node at ($(a3.east) + (0.3,0)$) {$\alpha_{3}$};
 \node at ($(a4.east) + (0.3,0)$) {$\alpha_{4}$};
 \end{tikzpicture},
 \end{equation*}
and using expressions \eqref{eq:d-roots} and \eqref{eq:a-roots} for the simple roots $\delta_{i}$ and $\alpha_{i}$. For example, the reflection~$m_{0}$ should transpose the roots $\delta_{1} = \mathcal{H}_{f} - \mathcal{E}_{5} - \mathcal{E}_{6}$ and $\delta_{2} = \mathcal{H}_{g} - \mathcal{E}_{7} - \mathcal{E}_{8}$ keeping the root $\delta_{0} = \mathcal{H}_{f} + \mathcal{H}_{g} - \mathcal{E}_{1} - \mathcal{E}_{2} - \mathcal{E}_{3} - \mathcal{E}_{4}$ fixed. Fixing~$\mathcal{E}_{i}$ for $i=1,\dots,4$ fixes the roots~$\alpha_{0}$, $\alpha_{1}$, $\alpha_{2}$, thus~$m_{0}$ should transpose the roots $\alpha_{3}$ and $\alpha_{5}$, and also transpose the roots $\alpha_{4}$ and $\alpha_{6}$, which is achieved by transposing $\mathcal{H}_{f}$ and $\mathcal{H}_{g}$, $\mathcal{E}_{5}$ and $\mathcal{E}_{7}$, and $\mathcal{E}_{6}$ and $\mathcal{E}_{8}$. Other cases are similar.
\end{proof}

\begin{Theorem} The action of $\mathbb{D}_{3}$ on $\operatorname{Pic}(\mathcal{X})$ is induced by the following elementary birational mappings on the family $\mathcal{X}_{\mathbf{b}}$ fixing $b_{4}$ and $\chi(\delta)$ $($where we use the notation $b_{i\cdots j}:= b_{i} +\cdots +b_{j})$:
 \begin{gather*}
 \left(\begin{matrix}
 {b}_{1} & {b}_{2} & {b}_{3} & {b}_{4}\\
 {b}_{5} & {b}_{6} & {b}_{7} & {b}_{8}
 \end{matrix}; \begin{matrix}
 f \\ g
 \end{matrix}\right) \overset{m_{0}\,}\longmapsto
 \left(\begin{matrix}
 b_{1} & b_{2} & b_{3} & b_{4} \\
 b_{7} & b_{8} & b_{5} & b_{6}
 \end{matrix}; \begin{matrix}
 -g \\ -f
 \end{matrix}\right), \\
 \left(\begin{matrix}
 {b}_{1} & {b}_{2} & {b}_{3} & {b}_{4}\\
 {b}_{5} & {b}_{6} & {b}_{7} & {b}_{8}
 \end{matrix}; \begin{matrix}
 f \\ g
 \end{matrix}\right)\\
\qquad{} \overset{m_{1}\,}\longmapsto
 \left(\begin{matrix}
 {b}_{4} - b_{28} & {b}_{4} - b_{18} & {b}_{47} - b_{8} & {b}_{4}\\
 {b}_{1258} - b_{4} & {b}_{1268} - b_{4} & {b}_{38} - b_{4} & {b}_{8}
 \end{matrix}; \begin{matrix}
 - f + b_{4} - b_{8} \\ \frac{ f(g + b_{1}) + b_{2}(f - b_{1}) }{ f + g } + b_{8} - b_{4}
 \end{matrix}\right), \\
 \left(\begin{matrix}
 {b}_{1} & {b}_{2} & {b}_{3} & {b}_{4}\\
 {b}_{5} & {b}_{6} & {b}_{7} & {b}_{8}
 \end{matrix}; \begin{matrix}
 f \\ g
 \end{matrix}\right) \\
\qquad{} \overset{m_{2}\,}\longmapsto
 \left(\begin{matrix}
 {b}_{4} - b_{26} & {b}_{4} - b_{16} & {b}_{45} - b_{6} & {b}_{4}\\
 {b}_{36} - b_{4} & {b}_{6} & {b}_{1267} - b_{4} & {b}_{1268} - b_{4}
 \end{matrix}; \begin{matrix}
 \frac{ g(f - b_{1}) - b_{2}(g + b_{1}) }{ f + g } + b_{4} - b_{6} \\ - g - b_{4} + b_{6}
 \end{matrix}\right), \\
 \left(\begin{matrix}
 {b}_{1} & {b}_{2} & {b}_{3} & {b}_{4}\\
 {b}_{5} & {b}_{6} & {b}_{7} & {b}_{8}
 \end{matrix}; \begin{matrix}
 f \\ g
 \end{matrix}\right)\\
\qquad{} \overset{r}\longmapsto
 \left(\begin{matrix}
 {b}_{4} - b_{28} & {b}_{4} - b_{18} & {b}_{47} - b_{8} & {b}_{4}\\
 {b}_{38} - b_{4} & {b}_{8} & {b}_{1258} - b_{4} & {b}_{1268} - b_{4}
 \end{matrix}; \begin{matrix}
 - \frac{ f(g + b_{1}) + b_{2}(f - b_{1})}{ f + g } + b_{4} - b_{8}\\ f - b_{4} + b_{8}
 \end{matrix}\right), \\
 \left(\begin{matrix}
 {b}_{1} & {b}_{2} & {b}_{3} & {b}_{4}\\
 {b}_{5} & {b}_{6} & {b}_{7} & {b}_{8}
 \end{matrix}; \begin{matrix}
 f \\ g
 \end{matrix}\right) \\
\qquad{} \overset{r^{2}}\longmapsto
 \left(\begin{matrix}
 {b}_{4} - b_{26} & {b}_{4} - b_{16} & {b}_{45} - b_{6} & {b}_{4}\\
 {b}_{1267} - b_{4} & {b}_{1268} - b_{4} & {b}_{36} - b_{4} & {b}_{6}
 \end{matrix}; \begin{matrix}
 g + b_{4} - b_{6} \\ - \frac{ g(f - b_{1}) - b_{2}(g + b_{1}) }{ f + g } - b_{4} + b_{6}
 \end{matrix}\right).
 \end{gather*}
\end{Theorem}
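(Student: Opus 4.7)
The plan is to follow the same strategy used in the proof of Theorem \ref{thm:bir-weyl}. For each element $\eta \in \{m_0, m_1, m_2, r, r^2\}$ of $\mathbb{D}_3$, Proposition \ref{prop:dyn-auto} gives the pull-back action $\eta^{*}$ on $\operatorname{Pic}(\mathcal{X})$, which tells us to which complete linear systems the new affine coordinates $\bar{f}$ and $\bar{g}$ belong. Specifically, if $\eta^{*}(\mathcal{H}_{\bar{f}}) = \mathcal{H}_f$ (respectively $\mathcal{H}_g$), then $\bar{f}$ is a M\"obius transformation of $f$ (respectively of $g$); if $\eta^{*}(\mathcal{H}_{\bar{f}}) = \mathcal{H}_f + \mathcal{H}_g - \mathcal{E}_1 - \mathcal{E}_2$, then $\bar{f}$ is a rational coordinate on the pencil of $(1,1)$-curves through the points $p_1(b_1,-b_1)$ and $p_2(b_2,-b_2)$. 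A direct computation shows that this pencil is spanned by $fg + (b_1+b_2)f - b_1 b_2$ and $f+g$, so that
\begin{gather*}
\bar{f} = \frac{K\bigl(fg + (b_1+b_2)f - b_1 b_2\bigr) + L(f+g)}{M\bigl(fg + (b_1+b_2)f - b_1 b_2\bigr) + N(f+g)},
\end{gather*}
with analogous expressions for $\bar{g}$.

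The free coefficients in these M\"obius/pencil expressions are then pinned down by the remaining pull-back relations on the exceptional classes $\mathcal{E}_i$: whenever $\eta_{*}(\mathcal{E}_i) = \mathcal{E}_j$, the mapping must send the blowup point $p_i$ of $\mathcal{X}_{\mathbf{b}}$ to the blowup point $p_j$ of $\mathcal{X}_{\bar{\mathbf{b}}}$, and these point-to-point conditions, together with the vanishing/pole structure dictated by the action on $\mathcal{H}_f, \mathcal{H}_g$, suffice to determine all coefficients. To state these conditions we need the evolution of the parameters $\bar{b}_j$, which we obtain as in Remark \ref{rem:period-action}: compute $\bar{a}_i = \chi(\eta^{*}(\bar{\alpha}_i))$ using Proposition \ref{prop:dyn-auto}, and then invert the relations \eqref{eq:root-params} to recover $\bar{b}_j$ from $\bar{a}_i$. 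The inversion introduces a two-parameter ambiguity, which we fix by the standing normalization $\bar{b}_4 = b_4$ and $\chi(\bar{\delta}) = \chi(\delta)$; in practice this amounts to post-composing with the explicit gauge $b_i \mapsto \lambda b_i \pm \mu$, $(f,g) \mapsto (\lambda f + \mu, \lambda g - \mu)$ and is the source of the additive shifts $b_4 - b_8$, $b_4 - b_6$, etc.\ appearing in the stated formulas.

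The main obstacle is the bookkeeping in the four cases $m_1, m_2, r, r^2$ where $\bar{f}$ or $\bar{g}$ lies in the pencil $|H_f + H_g - E_1 - E_2|$: there are six free coefficients in the pencil M\"obius expression, and one must carefully invoke all of the fixed or permuted exceptional-divisor images (and in particular that $\mathcal{E}_1$ and $\mathcal{E}_2$ are exchanged among themselves with classes $\mathcal{H}_f - \mathcal{E}_{1,2}$ or $\mathcal{H}_g - \mathcal{E}_{1,2}$) to determine every coefficient. One economical simplification is to compute $m_0, m_1, m_2$ directly and then obtain $r = m_2 \circ m_1$ and $r^2 = m_1 \circ m_2$ by composition of the birational formulas, using the relations $m_0^2 = m_1^2 = m_2^2 = e$, $m_0 m_1 m_2 = r$ of $\mathbb{D}_3$ as a consistency check. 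Finally, the involutivity of $m_0, m_1, m_2$ and the order-three property of $r$ should be verified on the formulas, which together with the already-confirmed action on $\operatorname{Pic}(\mathcal{X})$ certifies that the birational representation matches the abstract $\mathbb{D}_3$-action of Proposition \ref{prop:dyn-auto}.
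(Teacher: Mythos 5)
Your strategy is exactly the one the paper intends: the paper omits this proof as being ``similar to the proof of Theorem~\ref{thm:bir-weyl}'', and your outline---reading off from Proposition~\ref{prop:dyn-auto} which linear systems $\bar f$ and $\bar g$ are coordinates on (your pencil computation, spanned by $fg+(b_1+b_2)f-b_1b_2$ and $f+g$, is correct and matches the numerators in the stated formulas), obtaining the parameter evolution $\bar b_i$ via the period map as in Remark~\ref{rem:period-action} with the normalization fixing $b_4$ and $\chi(\delta)$, and pinning down the remaining M\"obius coefficients from the images of the $\mathcal{E}_i$---is precisely that argument. One caution about your optional shortcut: with the $\operatorname{Pic}(\mathcal{X})$ actions of Proposition~\ref{prop:dyn-auto} and the usual right-to-left composition one gets $r=m_1\circ m_2$ and $r^2=m_2\circ m_1$ (the reverse of what you wrote), and the relation $m_0m_1m_2=r$ you propose as a consistency check is false in $\mathbb{D}_3$ (a product of three reflections is a reflection, not a rotation); neither slip affects the direct determination of the five maps, which is the substance of the proof.
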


\begin{proof} Proof of this theorem is similar to proof of Theorem~\ref{thm:bir-weyl} and is omitted.
\end{proof}

\subsection{The semi-direct product structure}\label{sub:the_semi_direct_product_structure}
The extended affine Weyl group $\widetilde{W}\big(E_{6}^{(1)}\big)$ is a semi-direct product of its normal subgroup
$W\big(E_{6}^{(1)}\big) \triangleleft \widetilde{W}\big(E_{6}^{(1)}\big)$ and the subgroup of the diagram
automorphisms $\operatorname{Aut}\big(E_{6}^{(1)}\big)$,
\begin{gather*}
\widetilde{W}\big(E_{6}^{(1)}\big) = \operatorname{Aut}\big(D_{6}^{(1)}\big) \ltimes W\big(D_{6}^{(1)}\big).
\end{gather*}

We have just described the group structure of $W\big(E_{6}^{(1)}\big)$ and $\operatorname{Aut}\big(E_{6}^{(1)}\big)$ using generators and relations, so it remains to give the action of $\operatorname{Aut}\big(E_{6}^{(1)}\big)$ on $W\big(E_{6}^{(1)}\big)$. But elements of $\operatorname{Aut}\big(E_{6}^{(1)}\big)$ act as permutations of the simple roots $\alpha_{i}$, and so this action is just the same permutation of the corresponding reflections, $\sigma w_{\alpha_{i}} \sigma^{-1} = w_{\sigma(\alpha_{i})}$. For example, the automorphism $m_{1} = (\alpha_{0}\alpha_{4})(\alpha_{1} \alpha_{3})$ acts on $w_{i}$ as
\begin{gather*}
 m_{1} w_{0} m_{1} = w_{4},\qquad m_{1} w_{4} m_{1} = w_{0},\qquad
 m_{1} w_{1} m_{1} = w_{3},\qquad m_{1} w_{3} m_{1} = w_{1},\\
 m_{1} w_{i} m_{1} = w_{i} \quad \text{otherwise}.
\end{gather*}

\subsection{Decomposition of the translation elements}\label{sub:decomposition_of_the_translation_elements}
Finally, we need an algorithm for representing a translation element of $\widetilde{W}\big(E_{6}^{(1)}\big)$ as a composition of the generators of the group, then the corresponding discrete Painlev\'e equation can be understood as a composition of elementary birational maps. For this, we use the generalization to extended affine Weyl groups of Lemma~3.11 of~\cite{Kac:1985:ILA}.

\begin{Lemma}[reduction lemma]Let $W$ be a Weyl group generated by reflections $w_{i}$ in simple roots $\alpha_{i}$, $w\in W$ is a reduced expression in the generators, and let $l(w)$ be the length of $w$. Then $l(w\circ w_{i})< l(w)$ if and only if $w(\alpha_{i})<0$.
\end{Lemma}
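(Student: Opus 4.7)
The plan is to prove the lemma by comparing inversion sets. For each $w\in W$, define the inversion set
\[
 N(w) := \bigl\{\alpha \in \Phi^{+} : w(\alpha) \in \Phi^{-}\bigr\},
\]
where $\Phi^{+}$ and $\Phi^{-}$ are the positive and negative roots determined by the simple system $\{\alpha_{i}\}$. The cornerstone intermediate result is the classical identity $|N(w)| = l(w)$. I would establish this first by induction on $l(w)$: writing a reduced expression $w = w_{i_{1}}\cdots w_{i_{k}}$, one shows that $N(w)$ consists precisely of the $k$ distinct positive roots $\beta_{j} := w_{i_{k}} w_{i_{k-1}}\cdots w_{i_{j+1}}(\alpha_{i_{j}})$, $1 \le j \le k$. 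Positivity of each $\beta_{j}$ is what makes the expression reduced; distinctness follows because a coincidence $\beta_{j} = \beta_{j'}$ would, via the exchange principle, allow a shorter presentation of $w$.

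With $|N(\cdot)| = l(\cdot)$ in hand, the plan is to directly relate $N(ww_{i})$ to $N(w)$ using the well-known fact that the simple reflection $w_{i}$ permutes $\Phi^{+}\setminus\{\alpha_{i}\}$ and sends $\alpha_{i} \mapsto -\alpha_{i}$. Split $N(ww_{i})$ into the $\alpha_{i}$-part and its complement. For the $\alpha_{i}$-part: $\alpha_{i}\in N(ww_{i})$ iff $ww_{i}(\alpha_{i}) = -w(\alpha_{i}) < 0$, i.e.\ iff $w(\alpha_{i}) > 0$. For a general $\beta\in\Phi^{+}\setminus\{\alpha_{i}\}$, the substitution $\gamma := w_{i}(\beta)$ again lies in $\Phi^{+}\setminus\{\alpha_{i}\}$, and $\beta\in N(ww_{i})$ iff $w(\gamma)<0$ iff $\gamma\in N(w)$. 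Thus $\beta\mapsto w_{i}(\beta)$ is a bijection between $N(ww_{i})\setminus\{\alpha_{i}\}$ and $N(w)\setminus\{\alpha_{i}\}$.

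Combining the two cases gives the explicit count: if $w(\alpha_{i})>0$, then $\alpha_{i} \in N(ww_{i})\setminus N(w)$, so $|N(ww_{i})| = |N(w)|+1$; if $w(\alpha_{i})<0$, then $\alpha_{i}\in N(w)\setminus N(ww_{i})$, so $|N(ww_{i})| = |N(w)|-1$. Translating via $|N(\cdot)|=l(\cdot)$ yields $l(ww_{i})<l(w)$ iff $w(\alpha_{i})<0$, which is the lemma.

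The main obstacle is the foundational identity $|N(w)|=l(w)$, since the easy direction $|N(w)|\le l(w)$ is immediate but the reverse direction requires a genuine exchange-type argument; the rest of the proof is then essentially bookkeeping. For the paper's actual use in the extended affine Weyl group $\widetilde{W}(E_{6}^{(1)}) = \operatorname{Aut}(E_{6}^{(1)}) \ltimes W(E_{6}^{(1)})$, note that the Dynkin automorphisms permute simple roots and hence preserve lengths when conjugating; extending $l$ to $\widetilde{W}$ by $l(\sigma w) := l(w)$ for $\sigma\in\operatorname{Aut}(E_{6}^{(1)})$, $w\in W(E_{6}^{(1)})$, the statement $l(ww_{i})<l(w)$ for any $w\in \widetilde{W}$ reduces immediately to the Weyl-group case treated above, which is precisely Lemma~3.11 of~\cite{Kac:1985:ILA}.
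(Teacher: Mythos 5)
Your proposal is correct, but note that the paper does not actually prove this lemma at all: it is quoted as (the extension to extended affine Weyl groups of) Lemma~3.11 of~\cite{Kac:1985:ILA}, so the only honest comparison is with that cited source, whose argument is essentially the one you give. Your route via inversion sets $N(w)$ and the identity $|N(w)|=l(w)$ is the standard textbook proof, and the logical structure is sound and non-circular: the only geometric input is that a simple reflection $w_{i}$ permutes $\Phi^{+}\setminus\{\alpha_{i}\}$ and sends $\alpha_{i}\mapsto-\alpha_{i}$, from which both the counting step ($|N(ww_{i})|=|N(w)|\pm 1$ according to the sign of $w(\alpha_{i})$) and the exchange-type argument needed for $|N(w)|\geq l(w)$ (track the first index at which a partial image of $\alpha_{i_{j}}$ turns negative, identify it with a simple root, and cancel two letters) can be carried out; you correctly flag this last point as the one genuinely non-trivial ingredient rather than sweeping it under the rug. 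Your closing remark on the extended group $\widetilde{W}\big(E_{6}^{(1)}\big)=\operatorname{Aut}\big(E_{6}^{(1)}\big)\ltimes W\big(E_{6}^{(1)}\big)$ is also the right observation for how the paper actually uses the lemma: since a diagram automorphism $\sigma$ permutes the simple roots and hence preserves positivity, writing $\widetilde{w}=\sigma u$ gives $\widetilde{w}(\alpha_{i})<0$ if and only if $u(\alpha_{i})<0$, and with $l(\sigma u):=l(u)$ the statement reduces to the Weyl-group case. (A pedantic aside for the affine setting: one should take $N(w)$ inside the positive \emph{real} roots, since imaginary roots never change sign under $W$; this does not affect your argument.)
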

In performing computations, the following remark is very useful.

\begin{Remark}\label{rem:root-action}
 Let us create a vector $\upalpha = \langle \alpha_{0}, \dots, \alpha_{j},\dots, \alpha_{n} \rangle$ of simple roots, and let $w_{i}$ act
 of the simple roots as $w_{i}(\alpha_j) = \alpha_{j} + c_{ij}\alpha_{i}$. Then
\begin{align*}
(w\circ w_{i})(\upalpha) &= \langle w(w_{i}(\alpha_{0})),\dots, w(w_{i}(\alpha_{j})),\dots, w(w_{i}(\alpha_{n}))\rangle \\
&= \langle w(\alpha_{0} + c_{i0}\alpha_{i}),\dots, w(\alpha_{j} + c_{ij}\alpha_{i}),\dots, w(\alpha_{n} + c_{in}\alpha_{i}) \rangle \\
&= \langle w(\alpha_{0}) + c_{i0}w(\alpha_{i}),\dots, w(\alpha_{j}) + c_{ij}w(\alpha_{i}),\dots, w(\alpha_{n}) + c_{in}w(\alpha_{i}) \rangle,
\end{align*}
i.e., $(w\circ w_{i})(\upalpha)$ can be easily computed from $w(\upalpha)$ by acting by $w_{i}$ not on $\alpha_{j}$, but on the entries of the vector
$w(\upalpha)$ and using the same coefficients $c_{ij}$.
\end{Remark}

Extending this lemma to include the Dynkin diagram automorphisms is very straightforward. We illustrate this technique by an example corresponding to the dynamic described by equation~\eqref{eq:ex-QRT}.

\begin{Proposition}\label{prop:decomp-QRT} The mapping $\varphi$~\eqref{eq:ex-QRT} can be written in terms of generators of the symmetry group as
 \begin{gather}\label{eq:decomp-QRT}
 \varphi = r\circ w_{5}\circ w_{2}\circ w_{6}\circ w_{5}\circ w_{3}\circ w_{2}
 \circ w_{4}\circ w_{3}\circ w_{1}\circ w_{2}\circ w_{5}\circ w_{0}\circ w_{1}\circ w_{2}
 \circ w_{6}\circ w_{5}.
 \end{gather}
\end{Proposition}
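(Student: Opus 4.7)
The plan is to establish the decomposition by first computing the linearization $\varphi_{*}$ on the Picard lattice $\operatorname{Pic}(\mathcal{X})$, and then applying the reduction lemma iteratively to peel off generators one at a time until only a diagram automorphism remains.

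First, I would regularize the QRT mapping $\varphi$ of \eqref{eq:ex-QRT} by resolving its indeterminacies through a sequence of blowups, obtaining a surface $\mathcal{X}_{\mathbf{b}}$ of type $A_{2}^{(1)*}$. Identifying this surface with the canonical model of Section~\ref{sub:canonical_model} (via an isomorphism of Picard lattices matching the surface roots $\delta_{i}$ and symmetry roots $\alpha_{i}$), I would read off the explicit action of $\varphi_{*}$ on the generators $\mathcal{H}_{f}, \mathcal{H}_{g}, \mathcal{E}_{1}, \dots, \mathcal{E}_{8}$ by tracking how $\varphi$ pushes forward the vertical and horizontal line classes together with the exceptional divisors. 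From this, by linearity, I get $\varphi_{*}(\alpha_{i})$ for $i=0,\dots,6$, expressed in terms of the simple roots $\alpha_{j}$ (modulo $\delta$, which is preserved).

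Next, I would apply the extended reduction lemma to build up the word. The algorithm is: given the current element $w = \varphi_{*}$, scan the vector $\upalpha = \langle \alpha_{0}, \alpha_{1}, \dots, \alpha_{6} \rangle$ for an index $i$ such that the coefficient expansion of $w(\alpha_{i})$ is a negative root. By the reduction lemma, $l(w \circ w_{i}) < l(w)$, so I replace $w$ with $w \circ w_{i}$. Using Remark~\ref{rem:root-action}, the update is efficient: I simply act by $w_{i}$ on the \emph{entries} of the already-computed vector $w(\upalpha)$, using the Cartan coefficients $c_{ij}$, rather than recomputing from scratch. I iterate this process; after 16 reductions, the remaining element must have length $0$, i.e., must be an element of $\operatorname{Aut}(E_{6}^{(1)})$, and by inspecting its action on $\{\alpha_{i}\}$ I identify it with the rotation $r$ of Proposition~\ref{prop:dyn-auto}. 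Reading the reflections off in reverse order (and multiplying $r$ on the left) yields exactly the word \eqref{eq:decomp-QRT}.

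Finally, one verifies the decomposition by composing the birational maps $r, w_{5}, w_{2}, \dots, w_{5}$ given in Theorems~\ref{thm:bir-weyl} and the Dynkin automorphism theorem, and checking that the result agrees with $\varphi$ of \eqref{eq:ex-QRT} on the level of coordinates and parameters; equivalently, one checks agreement of the induced actions on $\operatorname{Pic}(\mathcal{X})$, since a birational map of $\mathcal{X}_{\mathbf{b}}$ whose linearization is the identity and which fixes the gauge (i.e.,~$b_{4}$ and $\chi(\delta)$) must itself be the identity. The main obstacle is not conceptual but combinatorial: the word has length 17, so the bookkeeping in the reduction process is lengthy, and at several steps multiple choices of negative roots are available, each leading to a different (but equivalent) reduced expression. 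Some care is required to make choices that yield the particular word \eqref{eq:decomp-QRT}, and to correctly handle the non-commutation between the automorphism $r$ and the reflections via the conjugation rule $\sigma w_{\alpha_{i}} \sigma^{-1} = w_{\sigma(\alpha_{i})}$ from Section~\ref{sub:the_semi_direct_product_structure}.
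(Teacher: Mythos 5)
Your proposal is correct and follows essentially the same route as the paper: obtain the translation action of $\varphi_{*}$ on the symmetry roots, then peel off reflections one at a time via the reduction lemma together with Remark~\ref{rem:root-action}, ending in the diagram automorphism $r$ and inverting the word. The only cosmetic difference is that the paper reads off $\varphi_{*}(\alpha_{i})$ more cheaply from the prescribed parameter evolution through the period map (Remark~\ref{rem:period-action}), whereas you propose computing the full push-forward action on $\operatorname{Pic}(\mathcal{X})$ (which the paper anyway records in Section~\ref{sub:the_deautonomization_example}); your closing remark on why the identity of linearizations plus the gauge normalization pins down the birational map is a point the paper leaves implicit.
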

\begin{proof}Since $\varphi$ acts on $b_{i}$ as $\bar{b}_{i} = b_{i}$ for $i=1,\ldots,6$ and $\bar{b}_{i} = b_{i} - d$ for $i=7,8$, in view of~\eqref{eq:root-params} it acts on root variables as
\begin{gather*}
 \varphi\colon \ \upalpha = (a_{0},a_{1},a_{2},a_{3},a_{4},a_{5},a_{6})\mapsto
(a_{0},a_{1},a_{2}, a_{3} - d,a_{4},a_{5}+ d,a_{6}),
\end{gather*}
where $d = \chi_{\mathcal{X}}(\delta) = a_{0} + 2a_{1} + 3 a_{2} + 2 a_{3} + a_{4} + 2a_{5} + a_{6}$ and $\delta$ is the corresponding null root given by the class of the anti-canonical divisor, $\delta = - \mathcal{K}_{\mathcal{X}} = \alpha_{0} + 2\alpha_{1} + 3 \alpha_{2} + 2 \alpha_{3} + \alpha_{4} + 2\alpha_{5} + \alpha_{6}$, $\varphi_{*}$~then acts on the symmetry root basis as
\begin{gather*}
 \varphi_{*}\colon \ \upalpha = (\alpha_{0},\alpha_{1},\alpha_{2},\alpha_{3},\alpha_{4},\alpha_{5},\alpha_{6})\mapsto
(\alpha_{0},\alpha_{1},\alpha_{2}, \alpha_{3} + \delta,\alpha_{4},\alpha_{5}- \delta,\alpha_{6}).
\end{gather*}

Using Remark~\ref{rem:root-action} and the notation $\alpha_{i\cdots j} = \alpha_{i}+\cdots+ \alpha_{j}$, we get
 \begin{gather*}
 \varphi_{*}(\upalpha) = (\alpha_{0},\alpha_{1},\alpha_{2}, \alpha_{3} + \delta,\alpha_{4},
 {\color{purple}\alpha_{5}- \delta},\alpha_{6}),\\
 \big(\varphi_{*}^{(1)} = \varphi_{*}\circ w_{5}\big)(\upalpha) = (\alpha_{0}, \alpha_{1},
 {\color{purple}\alpha_{25} - \delta}, \alpha_{3} + \delta, \alpha_{4}, \delta - \alpha_{5},
 {\color{purple}\alpha_{56} - \delta}),\\
 \big(\varphi_{*}^{(2)} = \varphi_{*}^{(1)}\circ w_{6}\big)(\upalpha) = (\alpha_{0}, \alpha_{1},
 {\color{purple}\alpha_{25} - \delta}, \alpha_{3} + \delta, \alpha_{4}, \alpha_{6},
 {\delta - \alpha_{56}}),\\
 \big(\varphi_{*}^{(3)} = \varphi_{*}^{(2)}\circ w_{2}\big)(\upalpha) = (\alpha_{0},
 {\color{purple}\alpha_{125} - \delta},
 {\delta - \alpha_{25}}, \alpha_{235}, \alpha_{4}, {\color{purple}\alpha_{256} - \delta},
 {\delta - \alpha_{56}}),\\
 \big(\varphi_{*}^{(4)} = \varphi_{*}^{(3)}\circ w_{1}\big)(\upalpha) = ({\color{purple}\alpha_{0125} - \delta},
 \delta - \alpha_{125},\alpha_{1}, \alpha_{235}, \alpha_{4}, {\color{purple}\alpha_{256} - \delta},
 {\delta - \alpha_{56}}),\\
 \big(\varphi_{*}^{(5)} = \varphi_{*}^{(4)}\circ w_{0}\big)(\upalpha) = (\delta - \alpha_{0125}, \alpha_{0},
 \alpha_{1}, \alpha_{235}, \alpha_{4}, {\color{purple}\alpha_{256} - \delta},
 {\delta - \alpha_{56}}),\\
 \big(\varphi_{*}^{(6)} = \varphi_{*}^{(5)}\circ w_{5}\big)(\upalpha) = (\delta - \alpha_{0125}, \alpha_{0},
 {\color{purple}\alpha_{1256} - \delta}, \alpha_{235}, \alpha_{4}, \delta - \alpha_{256},
 \alpha_{2}),\\
 \big(\varphi_{*}^{(7)} = \varphi_{*}^{(6)}\circ w_{2}\big)(\upalpha) = (\alpha_{12233456},
 {\color{purple}-\alpha_{1223345} },
 \alpha_{01223345}, {\color{purple}-\alpha_{01234}}, \alpha_{4}, \alpha_{1},
 \alpha_{2}),\\
 \big(\varphi_{*}^{(8)} = \varphi_{*}^{(7)}\circ w_{1}\big)(\upalpha) = (\alpha_{6},
 \alpha_{1223345},
 \alpha_{0}, {\color{purple}-\alpha_{01234}}, \alpha_{4}, \alpha_{1},
 \alpha_{2}),\\
 \big(\varphi_{*}^{(9)} = \varphi_{*}^{(8)}\circ w_{3}\big)(\upalpha) = (\alpha_{6},
 \alpha_{1223345},
 {\color{purple} -\alpha_{1234}}, \alpha_{01234}, {\color{purple}-\alpha_{0123}}, \alpha_{1},
 \alpha_{2}),\\
 \big(\varphi_{*}^{(10)} = \varphi_{*}^{(9)}\circ w_{4}\big)(\upalpha) = (\alpha_{6},
 \alpha_{1223345},
 {\color{purple} -\alpha_{1234}}, \alpha_{4}, \alpha_{0123}, \alpha_{1},
 \alpha_{2}),\\
 \big(\varphi_{*}^{(11)} = \varphi_{*}^{(10)}\circ w_{2}\big)(\upalpha) = (\alpha_{6},
 \alpha_{235},
 \alpha_{1234}, {\color{purple} -\alpha_{123}}, \alpha_{0123}, {\color{purple} -\alpha_{234}},
 \alpha_{2}), \\
 \big(\varphi_{*}^{(12)} = \varphi_{*}^{(11)}\circ w_{3}\big)(\upalpha) = (\alpha_{6},
 \alpha_{235},
 \alpha_{4}, \alpha_{123}, \alpha_{0}, {\color{purple} -\alpha_{234}},
 \alpha_{2}),\\
 \big(\varphi_{*}^{(13)} = \varphi_{*}^{(12)}\circ w_{5}\big)(\upalpha) = (\alpha_{6},
 \alpha_{235},
 {\color{purple} -\alpha_{23}}, \alpha_{123}, \alpha_{0}, \alpha_{234},
 {\color{purple} - \alpha_{34}}),\\
 \big(\varphi_{*}^{(14)} = \varphi_{*}^{(13)}\circ w_{6}\big)(\upalpha) = (\alpha_{6},
 \alpha_{235},
 {\color{purple} -\alpha_{23}},\alpha_{123}, \alpha_{0}, \alpha_{2},
 \alpha_{34}),\\
 \big(\varphi_{*}^{(15)} = \varphi_{*}^{(14)}\circ w_{2}\big)(\upalpha) = (\alpha_{6},
 \alpha_{5},
 \alpha_{23},\alpha_{1}, \alpha_{0}, {\color{purple} -\alpha_{3}},
 \alpha_{34}),\\
 \big(\varphi_{*}^{(16)} = \varphi_{*}^{(15)}\circ w_{5}\big)(\upalpha) = (\alpha_{6},
 \alpha_{5},
 \alpha_{2},\alpha_{1}, \alpha_{0}, \alpha_{3}, \alpha_{4}).
 \end{gather*}

Finally, we need to apply Dynkin diagram automorphism,
\begin{gather*}
\big(\varphi_{*}^{(17)}=\varphi_{*}^{(15)}\circ r^{2}\big)(\upalpha)=(\alpha_{0},\alpha_{1}, \alpha_{2},\alpha_{3},\alpha_{4},\alpha_{5},\alpha_{6}).
\end{gather*}
 Thus,
\begin{gather*}
\varphi_{*}\circ w_{5}\circ w_{6}\circ w_{2}\circ w_{1}\circ w_{0}\circ w_{5}\circ w_{2}\circ w_{1}\circ w_{3}
\circ w_{4} \circ w_{2}\circ w_{3}\circ w_{5}\circ w_{6}\circ w_{2}\circ w_{5}\circ r^{2} = \operatorname{id},
 \end{gather*}
 and applying the inverse we have
\begin{gather*}
 \varphi_{*} = r\circ w_{5}\circ w_{2}\circ w_{6}\circ w_{5}\circ w_{3}\circ w_{2}
 \circ w_{4}\circ w_{3}\circ w_{1}\circ w_{2}\circ w_{5}\circ w_{0}\circ w_{1}\circ w_{2}
 \circ w_{6}\circ w_{5},
\end{gather*}
which yields the claim.
\end{proof}

\section[Comparison of two discrete Painlev\'e equations of type $d$-$P\big(A_{2}^{(1)*}\big)$]{Comparison of two discrete Painlev\'e equations\\ of type $\boldsymbol{d}$-$\boldsymbol{P\big(A_{2}^{(1)*}\big)}$} \label{sec:comparison_of_two_discrete_painlev_e_equations_of_type_d_p_a__2_1}

\subsection{The deautonomization example}\label{sub:the_deautonomization_example}
This example, obtained by B.~Grammaticos, A.~Ramani, and Y.~Ohta as an application of the \emph{singularity confinement criterion} to
a deautonomization of a particular QRT mapping, was carefully described in~\cite{GraRamOht:2003:AUDOTAQVADIEATST}. Due to the simplicity
structure of the equation we will refer to it as a~\emph{model example}. Consider a birational map $\varphi\colon \mathbb{P}^{1}\times \mathbb{P}^{1} \dashrightarrow \mathbb{P}^{1}\times \mathbb{P}^{1}$ with parameters $b_{1},\dots, b_{8}$:
\begin{align*}
 \varphi\colon \quad &
 \left(\begin{matrix}
 b_{1} & b_{2} & b_{3} & b_{4}\\
 b_{5} & b_{6} & b_{7} & b_{8}
 \end{matrix}; f,g\right) \mapsto
 \left(\begin{matrix}
 \bar{b}_{1} & \bar{b}_{2} & \bar{b}_{3} & \bar{b}_{4}\\
 \bar{b}_{5} & \bar{b}_{6} & \bar{b}_{7} & \bar{b}_{8}
 \end{matrix}; \bar{f},\bar{g}\right),\\
& d= b_{1} + b_{2} + b_{3} + b_{4} + b_{5} + b_{6} + b_{7} + b_{8},\\
& \bar{b}_{1}= b_{1},\qquad \bar{b}_{3} = b_{3},\qquad \bar{b}_{5} = b_{5} + d,\qquad \bar{b}_{7} = b_{7} - d,\\
& \bar{b}_{2}= b_{2},\qquad \bar{b}_{4} = b_{4},\qquad \bar{b}_{6} = b_{6} + d,\qquad \bar{b}_{8} = b_{8} - d,
\end{align*}
and $\bar{f}$ and $\bar{g}$ are given by equation~\eqref{eq:ex-QRT}:
\begin{gather*}
 (f + g)(\bar{f}+g) =\frac{(g+b_1)(g+b_2)(g+b_3)(g+b_4)}{(g-b_5)(g-b_6)},\\
 (\bar{f}+g)(\bar{f}+\bar{g}) =\frac{(\bar{f}-\bar{b}_1) (\bar{f}-\bar{b}_2)(\bar{f}-\bar{b}_3) (\bar{f}-\bar{b}_4)}{
 (\bar{f}+\bar{b}_7)(\bar{f}+\bar{b}_8)}.
\end{gather*}
The singularity structure of this example is the same as the canonical model given on Fig.~\ref{fig:points-standard}. Using the equation, it is quite straightforward to compute the action $\varphi_{*}$ of this mapping on the $\operatorname{Pic}(\mathcal{X})$,
\begin{gather*}
 \mathcal{H}_{f} \mapsto 6 \mathcal{H}_{f} + 3 \mathcal{H}_{g} -2 \mathcal{E}_{1} -
 2 \mathcal{E}_{2} - 2 \mathcal{E}_{3} - 2 \mathcal{E}_{4} - \mathcal{E}_{5} -
 \mathcal{E}_{6} - 3 \mathcal{E}_{7} - 3 \mathcal{E}_{8},\\
 \mathcal{H}_{g} \mapsto 3 \mathcal{H}_{f} + \mathcal{H}_{g} - \mathcal{E}_{1} - \mathcal{E}_{2} -
 \mathcal{E}_{3} - \mathcal{E}_{4} - \mathcal{E}_{7} - \mathcal{E}_{8},\notag\\
 \mathcal{E}_{1} \mapsto 2 \mathcal{H}_{f} + \mathcal{H}_{g} - \mathcal{E}_{2} - \mathcal{E}_{3} -
 \mathcal{E}_{4} - \mathcal{E}_{7} - \mathcal{E}_{8},\notag\\
 \mathcal{E}_{2} \mapsto 2 \mathcal{H}_{f} + \mathcal{H}_{g} - \mathcal{E}_{1} - \mathcal{E}_{3} -
 \mathcal{E}_{4} - \mathcal{E}_{7} - \mathcal{E}_{8},\notag\\
 \mathcal{E}_{3} \mapsto 2 \mathcal{H}_{f} + \mathcal{H}_{g} - \mathcal{E}_{1} - \mathcal{E}_{2} -
 \mathcal{E}_{4} - \mathcal{E}_{7} - \mathcal{E}_{8},\notag\\
 \mathcal{E}_{4} \mapsto 2 \mathcal{H}_{f} + \mathcal{H}_{g} - \mathcal{E}_{1} - \mathcal{E}_{2} -
 \mathcal{E}_{3} - \mathcal{E}_{7} - \mathcal{E}_{8},\notag\\
 \mathcal{E}_{5} \mapsto 3 \mathcal{H}_{f} + \mathcal{H}_{g} - \mathcal{E}_{1} - \mathcal{E}_{2} -
 \mathcal{E}_{3} - \mathcal{E}_{4} - \mathcal{E}_{6} - \mathcal{E}_{7} - \mathcal{E}_{8},\notag\\
 \mathcal{E}_{6} \mapsto 3 \mathcal{H}_{f} + \mathcal{H}_{g} - \mathcal{E}_{1} - \mathcal{E}_{2} -
 \mathcal{E}_{3} - \mathcal{E}_{4} - \mathcal{E}_{5} - \mathcal{E}_{7} - \mathcal{E}_{8},\notag\\
 \mathcal{E}_{7} \mapsto \mathcal{H}_{f} - \mathcal{E}_{8},\notag\\
 \mathcal{E}_{8} \mapsto \mathcal{H}_{f} - \mathcal{E}_{7}.\notag
\end{gather*}
Thus, the induced action $\varphi_{*}$ on the sub-lattice $R^{\perp}$ is given by the \emph{translation} considered in
Proposition~\ref{prop:decomp-QRT} and given by~\eqref{eq:decomp-QRT},
\begin{gather*}
 \varphi_{*}\colon \ (\alpha_{0}, \alpha_{1}, \alpha_{2}, \alpha_{3}, \alpha_{4}, \alpha_{5}, \alpha_{6})\mapsto
 (\alpha_{0}, \alpha_{1}, \alpha_{2}, \alpha_{3}, \alpha_{4}, \alpha_{5}, \alpha_{6}) + (0,0,0,1,0,-1,0)\delta,\label{eq:dpa-trans-std}
\end{gather*}
as well as the permutation $\sigma_{r} = (\delta_{0}\delta_{1}\delta_{2}) = (D_{0}D_{1}D_{2})$ of the irreducible components of $-K_{\mathcal{X}}$.

\subsection{The Schlesinger transformation example}\label{sub:the_schlesinger_transformation_example}
This example has been described in detail in \cite{DzhSakTak:2013:DSTTHFADPE}, so below we only give a very brief outline of the setup. We consider a Fuchsian system of the spectral type $111,111,111$, i.e., this system has $n=2$ (finite) poles and the matrix size $m=3$. It is possible to map the finite poles to $z_{0} = 0$ and $z_{1}= 1$ by a M\"obius transformation and then use scalar gauge transformations to make
$\operatorname{rank}(\mathbf{A}_{i}) = 2$ at finite poles. Then our Fuchsian system has the form
\begin{gather*}
 \frac{\mathbf{dY}}{dz} = \mathbf{A}(z)\mathbf{Y} = \left(\frac{\mathbf{A}_{0}}{z} + \frac{\mathbf{A}_{1}}{z-1}\right)\mathbf{Y};
\end{gather*}
we also put $\mathbf{A}_{\infty} = - \mathbf{A}_{0} - \mathbf{A}_{1}$. The eigenvalues $\theta_{i}^{j}$ of $\mathbf{A}_{i}$ satisfy the
the \emph{Fuchs relation} and are encoded by the \emph{Riemann scheme}
\begin{gather*}
 \left\{
 \begin{tabular}{cccc}
 $z = 0$ & $ z = 1 $ & $z = \infty$ \\
 $\theta_{0}^{1}$ & $\theta_{1}^{1}$ & $\kappa_{1}$ \\
 $\theta_{0}^{2}$ & $\theta_{1}^{2}$ & $\kappa_{2}$ \\
 $ 0 $ & $ 0$ & $\kappa_{3}$
 \end{tabular}
 \right\},\qquad
 \theta_{0}^{1} + \theta_{0}^{2} + \theta_{1}^{1} + \theta_{1}^{2} + \sum_{j=1}^{3} \kappa_{j}= 0.
\end{gather*}
We consider an elementary two-point Schlesinger transformation $\left\{\begin{smallmatrix} 0&1\\1&1\end{smallmatrix}\right\}$ that acts
on the characteristic indices $\theta_{i}^{j}$ and $\kappa_{i}$ as follows:
\begin{gather*}
\left\{\begin{smallmatrix} 0&1\\1&1\end{smallmatrix}\right\}\colon \quad \bar{\theta}^{1}_{0} = \theta^{1}_{0} - 1, \qquad
\bar{\theta}^{1}_{1} = \theta^{1}_{1} + 1,\qquad \bar{\theta_{i}^{j}} = \theta_{i}^{j}\text{ otherwise}, \qquad \bar{\kappa}_{i} = \kappa_{i}.
\end{gather*}
This transformation can be performed using the specially chosen \emph{multiplier} matrix $\mathbf{R}(z)$ via $\overline{\mathbf{Y}}(z) = \mathbf{R}(z)\mathbf{Y}(z)$. Using the eigenvector decomposition of the coefficient matrices,
\begin{gather*}
 \mathbf{A}_{i} = \mathbf{B}_{i} \mathbf{C}_{i}^{\dag} =
 \begin{bmatrix}
 \mathbf{b}_{i,1} & \mathbf{b}_{i,2}
 \end{bmatrix} \begin{bmatrix}
 \mathbf{c}_{i}^{1\dag}\\[2pt] \mathbf{c}_{i}^{2\dag}
 \end{bmatrix},\qquad
 \mathbf{C}_{i}^{\dag} \mathbf{B}_{i} = \mathbf{\Theta}_{i} = \operatorname{diag}\big\{\theta_{i}^{1},\theta_{i}^{2}\big\},
\end{gather*}
and some remaining gauge freedom, we get the following parameterization:
\begin{gather*}
 \mathbf{B}_{0} = \begin{bmatrix}
 1 & 0 \\ 0 & 1 \\ 0 & 0
 \end{bmatrix}, \qquad \mathbf{C}_{0}^{\dag} = \begin{bmatrix}
 \theta_{0}^{1} & 0 & \alpha \\ 0 & \theta_{0}^{2} & \beta
 \end{bmatrix},\qquad
 \mathbf{B}_{1} = \begin{bmatrix}
 0 & 1 \\ 0 & 1 \\ 1 & 1
 \end{bmatrix}, \qquad
 \mathbf{C}_{1}^{\dag} = \begin{bmatrix}
 -\gamma - \theta_{1}^{1} & \gamma & \theta_{1}^{1} \\ \theta_{1}^{2} - \delta & \delta & 0
 \end{bmatrix}.
\end{gather*}
Requiring that the eigenvalues of $\mathbf{A}_{\infty}$ are $\kappa_{1}$, $\kappa_{2}$, and $\kappa_{3}$ results in the equations $\operatorname{tr} (\mathbf{A}_{\infty}) = \kappa_{1} + \kappa_{2} + \kappa_{3}$ (which is just the Fuchs relation), $|\mathbf{A}_{\infty}|_{11} + |\mathbf{A}_{\infty}|_{22} + |\mathbf{A}_{\infty}|_{33} = \kappa_{2}\kappa_{3} + \kappa_{3}\kappa_{1} + \kappa_{1}\kappa_{2}$ (where $|\mathbf{A}|_{ij}$ denotes the
$(ij)$-minor of $\mathbf{A}$) and $\operatorname{det}(\mathbf{A}_{\infty}) = \kappa_{1} \kappa_{2} \kappa_{3}$. We then notice, see \cite{DzhSakTak:2013:DSTTHFADPE, DzhTak:2015:GAORFSTTDPE} for details, that it is convenient to choose, as our coordinates,
\begin{gather*}
 x = \frac{ (\gamma + \delta)\big(\theta_{0}^{1} - \theta_{0}^{2}\big) }{ \theta_{1}^{1} - \theta_{1}^{2} },
 \qquad
 y = \frac{ \theta_{1}^{2} \gamma + \theta_{1}^{1} \delta }{ \gamma + \delta + \theta_{1}^{1} - \theta_{1}^{2}}.
\end{gather*}
Then
\begin{gather*}
 \alpha(x,y) = \frac{ 1 }{ (x + y) \big(\theta_{1}^{1} - \theta_{1}^{2}\big) } \left(y r_{1}(x,y) + \frac{ x\big(\theta_{0}^{1} r_{1}(x,y) + r_{2}(x,y)\big) }{ x + \theta_{0}^{1} - \theta_{0}^{2} }\right),\\
 \beta(x,y) = \frac{ 1 }{ (x + y) \big(\theta_{1}^{1} - \theta_{1}^{2}\big) } \big( \big(y + \theta_{0}^{2}\big) r_{1}(x,y) + r_{2}(x,y)\big),
\end{gather*}
where
\begin{gather*}\begin{split}
 &r_{1}(x,y) = \kappa_{1} \kappa_{2} + \kappa_{2} \kappa_{3} + \kappa_{3} \kappa_{1} - \big(y - \theta_{1}^{2}\big)\big(x - \theta_{0}^{2}\big) - \theta_{0}^{1} \big(y + \theta_{0}^{2}\big) -\theta_{1}^{1} \big(\theta_{0}^{1} + \theta_{0}^{2} + \theta_{1}^{2}\big),\\
& r_{2}(x,y) = \kappa_{1} \kappa_{2} \kappa_{3} + \theta_{1}^{1}\big(\big(y - \theta_{1}^{2}\big)\big(x - \theta_{0}^{2}\big) + \theta_{0}^{1} \big(y + \theta_{0}^{2}\big)\big).\end{split}
\end{gather*}
For the elementary Schlesinger transformation $\left\{\begin{smallmatrix} 0&1\\1&1\end{smallmatrix}\right\}$ the multiplier
matrix has the form $\mathbf{R}(z) = \mathbf{I} - \frac{1}{z} \frac{\mathbf{b}_{1,1}\mathbf{c}_{0}^{1\dag}}{\mathbf{c}_{0}^{1\dag}\mathbf{b}_{1,1}}$,
and the resulting discrete Schlesinger evolution equations, when written in terms of the eigenvector dynamics,
again see \cite{DzhSakTak:2013:DSTTHFADPE, DzhTak:2015:GAORFSTTDPE} for details, give us the map \eqref{eq:ex-ST}:
$\psi\colon (x,y)\to (\bar{x},\bar{y})$:
\begin{gather*}
 \bar{x}= \frac{(\alpha(x,y) - \beta(x,y)) \big(\alpha(x,y) x \big(\theta_{1}^{1} - \theta_{1}^{2}\big) +
 \big(1 + \theta_{0}^{2}\big)\big(x \big(y - \theta_{1}^{2}\big) + y \big(\theta_{0}^{1} - \theta_{0}^{2}\big)\big)\big) }{
 (\alpha(x,y) - \beta(x,y)) (x \big(y - \theta_{1}^{2}\big) + \big(\theta_{0}^{1} - \theta_{0}^{2}\big)y) -
 \alpha(x,y) \big(\theta_{1}^{1} + 1\big)\big(\theta_{0}^{1} - \theta_{0}^{2}\big) },\\
 \bar{y} = \frac{ (\alpha(x,y) - \beta(x,y))\big(y(x + \theta_{0}^{1} - \theta_{0}^{2}) - \theta_{1}^{2} x\big)
 }{ \alpha(x,y) \big(\theta_{0}^{1} - \theta_{0}^{2}\big) }. 
\end{gather*}
Using a Computer Algebra System, such as \textbf{Mathematica}, we can find and resolve the indeterminate points of the dynamic to obtain the blowup diagram on Fig.~\ref{fig:points-ST}. It is essentially the same as the canonical model, and so we can use the same choice of the root bases. However, now the coordinates of the blow-up points are
\begin{alignat*}{4}
 &p_{1}\big(\theta_{0}^{2} + \kappa_{1}, - \theta_{0}^{2} - \kappa_{1}\big), & \qquad
 &p_{3}\big(\theta_{0}^{2} + \kappa_{3}, - \theta_{0}^{2} - \kappa_{3}\big), &\qquad
 &p_{5}\big(\infty,\theta_{1}^{1}\big),&\qquad
 &p_{7}\big(\theta_{0}^{2} - \theta_{0}^{1},\infty\big),\\
 &p_{2}\big(\theta_{0}^{2} + \kappa_{2}, - \theta_{0}^{2} - \kappa_{2}\big), &\qquad
 &p_{4}(0,0),&\quad
 &p_{6}\big(\infty,\theta_{1}^{2}\big),&\qquad
 &p_{8}\big(\theta_{0}^{2}+1,\infty\big),
\end{alignat*}
and this allows us to perform the parameter matching:
\begin{alignat*}{4}
 b_{1} &= \theta_{0}^{2} + \kappa_{1},&\qquad b_{2} &= \theta_{0}^{2} + \kappa_{2},&\qquad
 b_{3} &= \theta_{0}^{2} + \kappa_{3},&\qquad b_{4} &= 0,\\
 b_{5} &= \theta_{1}^{1},&\qquad b_{6} &= \theta_{1}^{2},&\qquad
 b_{7} &= \theta_{0}^{1} - \theta_{0}^{2},&\qquad b_{8} &= - \theta_{0}^{2} - 1.
\end{alignat*}

Thus, $d = b_{1}+\cdots + b_{8} = -1$, and we get the following root variable evolution:
\begin{alignat*}{3}
 \bar{a}_{i} &= a_{i},\quad i=0,1,2,&\qquad
 \bar{a}_{3} &= a_{3} - 1 = a_{3} + d,&\qquad
 \bar{a}_{4} &= a_{4} + 1 = a_{4} - d,\\
 \bar{a}_{5} &= a_{5} + 1 = a_{5} - d,&\qquad
 \bar{a}_{6} &= a_{6} - 1 = a_{6} + d.
\end{alignat*}

 \begin{figure}[ht]
 \centering
 \begin{tikzpicture}[>=stealth,
 elt/.style={circle,draw=red!100, fill=red!100, thick, inner sep=0pt,minimum size=1.5mm},scale=1.25]
 \draw[black] (-4.5,-0.5) -- (-4.5,3.5);
 \draw[black,thick] (-1.5,-0.5) -- (-1.5,3.5);
 \draw[black] (-5,0) -- (-1,0);
 \draw[black,thick] (-5,3) -- (-1,3);
 \draw[black,thick] (-5,-0.5) -- (-1,3.5);
 \node[style = elt] (p7) at (-3.5,3) {}; \node[elt] at (-2.5,3) (p8) {};
 \node at ($(p7.north) + (0,0.2)$) {\small $p_{7}$};
 \node at ($(p8.north) + (0,0.2)$) {\small $p_{8}$};
 \node[elt] at (-1.5,2) (p6) {}; \node[elt] at (-1.5,1) (p5) {};
 \node at ($(p5.east) + (0.2,0)$) {\small $p_{5}$};
 \node at ($(p6.east) + (0.2,0)$) {\small $p_{6}$};
 \node[elt] at (-4.5,0) (p4){}; \node[elt] at (-2.2,2.3) (p1) {};
 \node[elt] at (-2.8,1.7) (p2) {}; \node[elt] at (-3.4,1.1) (p3) {};
 \node at ($(p1.east) + (0.2,0)$) {\small $p_{1}$};
 \node at ($(p2.east) + (0.2,0)$) {\small $p_{2}$};
 \node at ($(p3.east) + (0.2,0)$) {\small $p_{3}$};
 \node at ($(p4.east) + (0.2,-0.2)$) {\small $p_{4}$};

 \node at (-5.5,3) {\small $y=\infty$}; \node at (-0.7,3) {\small $H_{y}$};
 \node at (-5.5,0) {\small $y=0$}; \node at (-0.7,0) {\small $H_{y}$};
 \node at (-4.5,3.8) {\small $x=0$}; \node at (-1.5,3.8) {\small $x = \infty$};
 \node at (-4.5,-0.8) {\small $H_{x}$}; \node at (-1.5,-0.8) {\small $H_{x}$};
 \node at (-5.5,-0.7) {\small $H_{x}+H_{y}$};
 \node at (-0.1,3.5) {\small $x + y = 0$};

 \draw[black] (1,0.5) -- (2.7,3.5);
 \draw[blue,very thick] (4.5,-0.5) -- (4.5,3.5) node[above] {$D_{1}$};

 \draw[black] (2,-0.5) -- (5,1.2);
 \draw[blue,very thick] (1,3) -- (5,3) node[right] {$D_{2}$};
 \draw[blue,very thick] (1,-0.5) -- (5,3.5) node[above right] {$D_{0}$};
 \node at (5.4,2.7) {\small $H_{y} - E_{7} - E_{8}$};
 \node at (4.5,-0.7) {\small $H_{x} - E_{5} - E_{6}$};
 \node at (1.5,-0.7) {\small $H_{x} + H_{g} - E_{1} - E_{2} - E_{3} - E_{4}$};

 \node[style = elt] (e7) at (3,3) {}; \node[elt] at (3.5,3) (e8) {};
 \draw[red] (2.8,2.6) -- (3.2,3.4); \draw[red] (3.3,2.6) -- (3.7,3.4);
 \node[elt] at (4.5,2) (e6) {}; \node[elt] at (4.5,1.5) (e5) {};
 \draw[red] (4.1,1.8) -- (4.9,2.2);
 \draw[red] (4.1,1.3) -- (4.9,1.7);
 \node[elt] at (3.9,2.4) (e4){}; \node[elt] at (3.3,1.8) (e3) {};
 \node[elt] at (2.7,1.2) (e2) {}; \node[elt] at (2.1,0.6) (e1) {};
 \draw[red] (1.1,1.6) -- (3.1,-0.4);
 \draw[red] (2.3,1.4) -- (3.1,1);
 \draw[red] (2.9,2) -- (3.7,1.6);
 \draw[red] (3.5,2.6) -- (4.3,2.2);
 \node at ($(e1.south) + (0.05,-0.15)$) {\small $E_{4}$};
 \node at ($(e2.north) + (-0.05,0.15)$) {\small $E_{3}$};
 \node at ($(e3.north) + (-0.05,0.15)$) {\small $E_{2}$};
 \node at ($(e4.north) + (-0.05,0.15)$) {\small $E_{1}$};
 \node at ($(e5.east) + (0.1,-0.15)$) {\small $E_{5}$};
 \node at ($(e6.east) + (0.1,-0.15)$) {\small $E_{6}$};
 \node at ($(e7.north) + (-0.1,0.15)$) {\small $E_{7}$};
 \node at ($(e8.north) + (-0.1,0.15)$) {\small $E_{8}$};

 \draw[thick, black,<-,dashed] (-1,1.5) -- (1,1.5);
 \end{tikzpicture}
\caption{Point Configuration for the Schlesinger transformation example.} \label{fig:points-ST}
\end{figure}
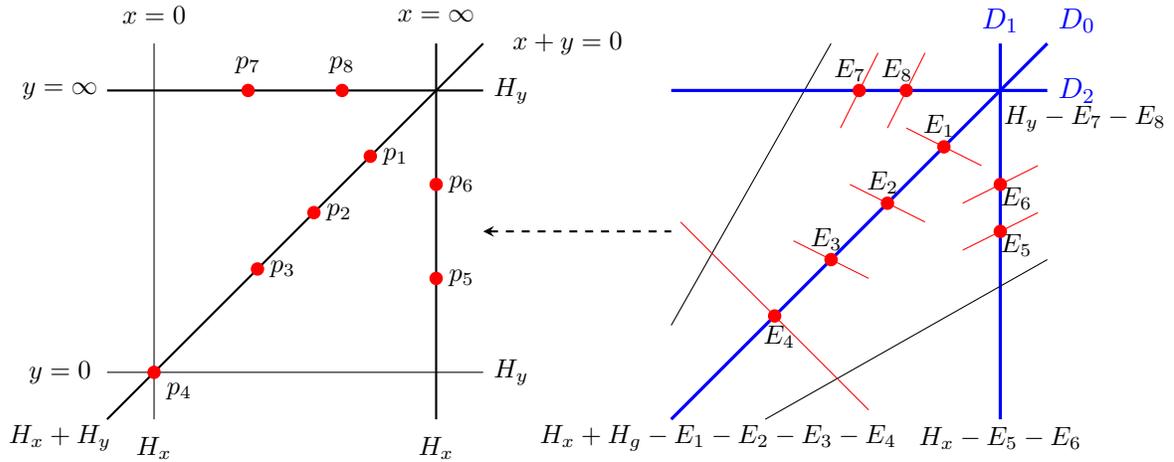

Using Remark~\ref{rem:period-action}, we get the following action of the mapping $\psi_{*}\colon \operatorname{Pic}(\mathcal{X})\to \operatorname{Pic}(\mathcal{X})$ on the roots:
\begin{gather*}
\psi_{*}\colon \ (\alpha_{0}, \alpha_{1}, \alpha_{2}, \alpha_{3}, \alpha_{4}, \alpha_{5}, \alpha_{6})\mapsto (\alpha_{0}, \alpha_{1}, \alpha_{2}, \alpha_{3}, \alpha_{4}, \alpha_{5}, \alpha_{6}) + (0,0,0,-1,1,1,-1)\delta.
\end{gather*}

\begin{Proposition}\label{prop:decomp-ST} The mapping $\psi$ can be written in terms of generators of the symmetry group as
\begin{gather}\label{eq:decomp-ST}
\psi = r\circ w_{1}\circ w_{2}\circ w_{6}\circ w_{5}\circ w_{3}\circ w_{2}\circ w_{4}\circ w_{3}\circ w_{1}\circ w_{2}\circ w_{5}\circ w_{0}\circ w_{1}\circ w_{2}\circ w_{6}\circ w_{3}.
\end{gather}
Thus, the mapping $\psi_{*}$ acts on $\operatorname{Pic}(\mathcal{X})$ as
\begin{gather*}
\mathcal{H}_{f} \mapsto 2 \mathcal{H}_{f} + 3 \mathcal{H}_{g} - \mathcal{E}_{1} - \mathcal{E}_{2} - \mathcal{E}_{3} - \mathcal{E}_{4} - 2 \mathcal{E}_{5} - 2\mathcal{E}_{8}, \\
\mathcal{H}_{g} \mapsto 3 \mathcal{H}_{f} + 5 \mathcal{H}_{g} - 2\mathcal{E}_{1} - 2\mathcal{E}_{2} - 2\mathcal{E}_{3} -2\mathcal{E}_{4} - 3 \mathcal{E}_{5} - \mathcal{E}_{6} - 2\mathcal{E}_{8},\\ 
\mathcal{E}_{1} \mapsto \mathcal{H}_{f} + 2\mathcal{H}_{g} - \mathcal{E}_{2} - \mathcal{E}_{3} - \mathcal{E}_{4} - \mathcal{E}_{5} - \mathcal{E}_{8}, \\ 
\mathcal{E}_{2} \mapsto \mathcal{H}_{f} + 2\mathcal{H}_{g} - \mathcal{E}_{1} - \mathcal{E}_{3} - \mathcal{E}_{4} -\mathcal{E}_{5} - \mathcal{E}_{8},\\ 
\mathcal{E}_{3} \mapsto \mathcal{H}_{f} + 2\mathcal{H}_{g} - \mathcal{E}_{1} - \mathcal{E}_{2} - \mathcal{E}_{4} -\mathcal{E}_{5} - \mathcal{E}_{8},\\ 
\mathcal{E}_{4} \mapsto \mathcal{H}_{f} + 2\mathcal{H}_{g} - \mathcal{E}_{1} - \mathcal{E}_{2} - \mathcal{E}_{3} - \mathcal{E}_{5} - \mathcal{E}_{8}, \\
\mathcal{E}_{5} \mapsto \mathcal{E}_{7},\\
\mathcal{E}_{6} \mapsto 2\mathcal{H}_{f} + 2\mathcal{H}_{g} - \mathcal{E}_{1} - \mathcal{E}_{2} - \mathcal{E}_{3} - \mathcal{E}_{4} - 2\mathcal{E}_{5} - \mathcal{E}_{8},\\
\mathcal{E}_{7} \mapsto 2\mathcal{H}_{f} + 3\mathcal{H}_{g} - \mathcal{E}_{1} - \mathcal{E}_{2} - \mathcal{E}_{3} - \mathcal{E}_{4} - 2\mathcal{E}_{5} - \mathcal{E}_{6} - 2\mathcal{E}_{8},\\
\mathcal{E}_{8} \mapsto \mathcal{H}_{g} - \mathcal{E}_{5}. 
\end{gather*}
\end{Proposition}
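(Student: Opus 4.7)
The proof mirrors that of Proposition~\ref{prop:decomp-QRT} step for step, with the translation data
\[
\psi_{*}(\upalpha) = \upalpha + (0,0,0,-1,1,1,-1)\,\delta, \qquad \delta = \alpha_{0} + 2\alpha_{1} + 3\alpha_{2} + 2\alpha_{3} + \alpha_{4} + 2\alpha_{5} + \alpha_{6},
\]
replacing the QRT translation vector $(0,0,0,1,0,-1,0)$. The goal is to exhibit $\psi_{*}$ as a word in the fixed generating set of $\widetilde{W}\big(E_{6}^{(1)}\big)$ consisting of the seven reflections $w_{i}$ and the dihedral automorphisms of Proposition~\ref{prop:dyn-auto}.

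The mechanism is the reduction lemma extended to the semi-direct product: at each stage one inspects the current vector $w(\upalpha)$, picks an index $i$ for which the $i$-th entry is a negative root, and composes on the right with $w_{i}$; by the reduction lemma this strictly decreases the Weyl length of the remaining Weyl factor, while Dynkin automorphisms permute the simple roots without sign changes. Remark~\ref{rem:root-action} converts each such composition into a short column-update of the seven-entry vector. In $\psi_{*}(\upalpha)$ the negative entries sit in slots $3$ and $6$, and~\eqref{eq:decomp-ST} dictates that we choose $w_{3}$ first. Following the same prescription, the sequence of right-compositions obtained by reading~\eqref{eq:decomp-ST} from right to left is $w_{3}, w_{6}, w_{2}, w_{1}, w_{0}, w_{5}, w_{2}, w_{1}, w_{3}, w_{4}, w_{2}, w_{3}, w_{5}, w_{6}, w_{2}, w_{1}$, and I would verify by direct computation -- tracking only the seven-component vector at each step -- that after these sixteen reflections the resulting vector is a straight permutation of $(\alpha_{0},\dots,\alpha_{6})$ realizing the Dynkin automorphism $r^{2} = (\delta_{0}\delta_{2}\delta_{1})$ recorded in Proposition~\ref{prop:dyn-auto}. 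Composing once more on the right with $r^{-1} = r^{2}$ gives the identity, and inverting the resulting equation (using $w_{i}^{-1} = w_{i}$ and $\big(r^{2}\big)^{-1} = r$) yields exactly~\eqref{eq:decomp-ST}.

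Once the word decomposition is in hand, the induced action on $\operatorname{Pic}(\mathcal{X})$ is purely mechanical: each generator's action on the basis $\{\mathcal{H}_{f}, \mathcal{H}_{g}, \mathcal{E}_{1}, \dots, \mathcal{E}_{8}\}$ is recorded in Theorem~\ref{thm:bir-weyl} and Proposition~\ref{prop:dyn-auto}, so it suffices to compose the seventeen rank-ten linear maps on the Picard lattice and collect the formulas displayed in the statement. The main obstacle is not conceptual but practical bookkeeping; the reduction step is non-deterministic when more than one entry is simultaneously negative, so the intermediate choices must be made to agree with the intended word~\eqref{eq:decomp-ST}. Any other reduced expression produced along a different branch would be related to~\eqref{eq:decomp-ST} by the braid and commutation relations of $\widetilde{W}\big(E_{6}^{(1)}\big)$, and hence represents the same element of the symmetry group.
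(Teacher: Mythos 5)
Your proposal takes exactly the route of the paper's own proof: the paper simply states that the decomposition is obtained as in Proposition~\ref{prop:decomp-QRT} (the reduction lemma together with Remark~\ref{rem:root-action}, applied now to the translation vector $(0,0,0,-1,1,1,-1)\delta$), and that the action on $\operatorname{Pic}(\mathcal{X})$ then follows by composing the generator actions of Theorem~\ref{thm:bir-weyl} and Proposition~\ref{prop:dyn-auto}. One small correction: after the sixteen right-compositions $w_{3},w_{6},w_{2},w_{1},w_{0},w_{5},w_{2},w_{1},w_{3},w_{4},w_{2},w_{3},w_{5},w_{6},w_{2},w_{1}$ the residual permutation of the simple roots is the automorphism $r=(\delta_{0}\delta_{1}\delta_{2})=(\alpha_{0}\alpha_{6}\alpha_{4})(\alpha_{1}\alpha_{5}\alpha_{3})$, exactly as in the QRT computation, not $r^{2}$; your own next steps~--- composing on the right with $r^{-1}=r^{2}$ to reach the identity and then inverting to produce the leading factor $r$ in~\eqref{eq:decomp-ST}~--- are consistent only with residual $r$, so this is a naming slip rather than a flaw in the argument.
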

\begin{proof}The decomposition of the mapping $\psi_{*}$ is obtained in the same way as in Proposition~\ref{prop:decomp-QRT}, and from there the action on $\operatorname{Pic}(\mathcal{X})$ immediately follows.
\end{proof}

\begin{Remark} Note that the action of $\psi_{*}$ on $\operatorname{Pic}(\mathcal{X})$ can be computed directly from the mapping~\eqref{eq:ex-ST}, however that computation is very complicated and has to be done using computer algebra. The approach of Proposition~\ref{prop:decomp-ST} is significantly simpler.
\end{Remark}

\subsection{Equivalence of the two dynamics}\label{sub:equivalence_of_the_two_dynamics}

We are now in the position to prove the main result of this paper.

\begin{Theorem}\label{thm:equivalence} The elementary Schlesinger transformation dynamics given by \eqref{eq:ex-ST} and the standard $($deautonomized QRT$)$ dynamics given by~\eqref{eq:ex-QRT} are \emph{equivalent} through
 the explicit change of variables transforming one equation into the other:
\begin{alignat*}{2}
 f&= \dfrac{x \big(y - \theta_{1}^{1}\big) - \big(\kappa_{1} + \theta_{0}^{2} + \theta_{1}^{1}\big) y}{y + \kappa_{1} + \theta_{0}^{2}},&\qquad \bar{f}&=\dfrac{\bar{x} \big(\bar{y} - \big(\theta_{1}^{1}+1\big)\big) -
 \big(\kappa_{1} + \theta_{0}^{2} + \theta_{1}^{1}+1\big) \bar{y}}{\bar{y} + \kappa_{1} + \theta_{0}^{2}},\\
 g&= \dfrac{x (y + \kappa_{1} + \theta_{0}^{1}) + \big(\theta_{0}^{1} - \theta_{0}^{2}\big)y }{x- \kappa_{1} - \theta_{0}^{2}},& \qquad
 \bar{g}&=\dfrac{\bar{x} \big(\bar{y} + \kappa_{1} + \theta_{0}^{1}-1\big) + \big(\theta_{0}^{1}-1 - \theta_{0}^{2}\big)\bar{y} }{\bar{x}- \kappa_{1} - \theta_{0}^{2}}.
 \end{alignat*}
Note that this change of variables also affects the identification of parameters, which becomes
\begin{alignat*}{4}
b_{1} &=-\kappa_{1} - \theta_{0}^{1} - \theta_{1}^{1},&\qquad
b_{2} &= \kappa_{2} + \theta_{0}^{2},&\qquad
b_{3} &= \kappa_{3} + \theta_{0}^{2},&\qquad b_{4} &= 0,\\
b_{5} &= \theta_{0}^{1} - \theta_{0}^{2},&\qquad
b_{6} &= \kappa_{1} + \theta_{0}^{1} + \theta_{1}^{2},&\qquad
b_{7} &=\theta_{1}^{1},&\qquad
b_{8} &= \kappa_{1} + \theta_{1}^{1}-1,
\end{alignat*}
and the parameter evolution $\bar{\theta}_{0}^{1} = \theta_{0}^{1} - 1$, $\bar{\theta}_{1}^{1} = \theta_{1}^{1} +1$ gives the standard evolution of the parame\-ters~$b_{i}$:
\begin{alignat*}{3}
\bar{b}_{1} &=-\kappa_{1} - \theta_{0}^{1} - \theta_{1}^{1} = b_{1},\!\!&\qquad
\bar{b}_{2} &= \kappa_{2} + \theta_{0}^{2} = b_{2},&\qquad
\bar{b}_{3} &= \kappa_{3} + \theta_{0}^{2} = b_{3}, \\
\bar{b}_{4} &= 0 = b_{4},&\qquad
\bar{b}_{5} &= \theta_{0}^{1} - \theta_{0}^{2}-1 = b_{5} - 1,\!\! &\qquad
\bar{b}_{6} &= \kappa_{1} + \theta_{0}^{1} + \theta_{1}^{2}-1 = b_{6} - 1,\\
\bar{b}_{7} &=\theta_{1}^{1}+1 = b_{7} + 1,&\qquad
\bar{b}_{8} &= \kappa_{1} + \theta_{1}^{1} = b_{8}+ 1.
\end{alignat*}
The inverse change of variables is given by the same formulas $($with the corresponding change of variables and parameters$)$.
\end{Theorem}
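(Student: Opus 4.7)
The plan is to execute step (d) of the general procedure outlined in the introduction: find a conjugating element $\sigma_* \in \widetilde{W}\big(E_{6}^{(1)}\big)$ such that $\psi_* = \sigma_* \circ \varphi_* \circ \sigma_*^{-1}$, and then realize the corresponding birational map $\sigma$ to obtain the claimed change of variables. Propositions~\ref{prop:decomp-QRT} and~\ref{prop:decomp-ST} have already expressed $\varphi_*$ and $\psi_*$ as words of length $17$ in the generators. Comparing \eqref{eq:decomp-QRT} and \eqref{eq:decomp-ST} position by position, the two words agree in $15$ out of $17$ positions, differing only at the second position ($w_5$ vs.~$w_1$) and the last position ($w_5$ vs.~$w_3$). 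This close agreement makes it plausible that a short $\sigma_*$ exists.

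The first step is to pin down $\sigma_*$. Using the rule $\tau w_{\alpha_i} \tau^{-1} = w_{\tau(\alpha_i)}$, together with the explicit action of $r$, $r^{2}$, $m_{0}$, $m_{1}$, $m_{2}$ on the simple roots from Proposition~\ref{prop:dyn-auto}, I would search among small candidates of the form $\sigma_* = \tau \cdot w_{i_{1}}\cdots w_{i_{\ell}}$ with $\tau\in \mathbb{D}_{3}$ and $\ell$ small. The systematic tool is to compute $\sigma_*^{-1}\circ\psi_*\circ\sigma_*\circ\varphi_*^{-1}$ and reduce it to the identity using the Reduction Lemma~\ref{rem:root-action} and Remark~\ref{rem:root-action}; the length of $\sigma_*$ should be small because the differing positions in the two words are localized to the extreme ends of a long common middle block.

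Once $\sigma_*$ is identified, the second step is mechanical: compose the elementary birational maps of Theorem~\ref{thm:bir-weyl} and the $\mathbb{D}_{3}$-theorem in the order prescribed by $\sigma_*$, maintaining throughout the gauge fixing ($b_{4}$ and $\chi(\delta)$ fixed). This simultaneously yields the coordinate formulas relating $(x,y)$ and $(f,g)$ and the action on the blowup parameters. Substituting the Schlesinger identification $b_{1}=\theta_{0}^{2}+\kappa_{1}$, $b_{5}=\theta_{1}^{1}$, etc., from Section~\ref{sub:the_schlesinger_transformation_example} into $\sigma_*(b_{1},\ldots,b_{8})$ produces the new identification stated in the theorem, e.g., $b_{1}=-\kappa_{1}-\theta_{0}^{1}-\theta_{1}^{1}$, $b_{6}=\kappa_{1}+\theta_{0}^{1}+\theta_{1}^{2}$, and so on. The barred relations come from applying the same $\sigma$ after the shift $\theta_{0}^{1}\mapsto\theta_{0}^{1}-1$, $\theta_{1}^{1}\mapsto\theta_{1}^{1}+1$, and one then checks immediately that with $d=b_{1}+\cdots+b_{8}=-1$ this reproduces $\bar{b}_{i}=b_{i}$ for $i=1,\ldots,4$, $\bar{b}_{5,6}=b_{5,6}-1$, $\bar{b}_{7,8}=b_{7,8}+1$, which is precisely the standard QRT evolution of Section~\ref{sub:the_deautonomization_example}.

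The equivalence of dynamics $\sigma\circ\psi=\varphi\circ\sigma$ at the birational level is then automatic from the group-theoretic identity $\psi_{*}=\sigma_{*}\circ\varphi_{*}\circ\sigma_{*}^{-1}$, since the birational representation of $\widetilde{W}\big(E_{6}^{(1)}\big)$ on the family $\mathcal{X}_{\mathbf{b}}$ is faithful up to the gauge, and an independent symbolic verification via a CAS such as \textbf{Mathematica} is available but conceptually superfluous. The main obstacle in the whole argument is step~1 — locating $\sigma_*$ explicitly. Although the two decompositions differ in only two positions, those positions lie at opposite ends of a long word, so the conjugation has to ``thread through'' the common middle block; carrying out the reduction by hand demands careful bookkeeping with braid and commutation relations, and this is where the bulk of the work lies. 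Once $\sigma_*$ is in hand, everything else is a mechanical composition of the elementary formulas from Section~\ref{sec:the_structure_of_the_symmetry_group_widetilde_w_left_e__6_1_right}.
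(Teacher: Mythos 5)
Your overall strategy is the paper's: compare the decompositions \eqref{eq:decomp-QRT} and \eqref{eq:decomp-ST}, find a conjugating element, and realize it birationally via Theorem~\ref{thm:bir-weyl}. But as written there is a genuine gap at the decisive step: you never actually produce the conjugator, you only outline a search over candidates $\tau\cdot w_{i_1}\cdots w_{i_\ell}$ with the reduction lemma, and you declare this search the ``bulk of the work,'' requiring braid/commutation bookkeeping to ``thread through'' the long common block. This misreads the structure. Precisely because the two words differ only in the letter adjacent to $r$ and in the final letter, the common block $M$ cancels outright: writing $\varphi = r\circ w_5\circ M\circ w_5$ and $\psi = r\circ w_1\circ M\circ w_3$, solve the second for $M = w_1\circ r^2\circ \psi\circ w_3$ and substitute to get $\varphi = r\circ w_5\circ w_1\circ r^2\circ \psi\circ w_3\circ w_5$. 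Since $\sigma w_{\alpha_i}\sigma^{-1}=w_{\sigma(\alpha_i)}$ and, by Proposition~\ref{prop:dyn-auto}, $r=(\alpha_0\alpha_6\alpha_4)(\alpha_1\alpha_5\alpha_3)$, one has $r\circ w_5\circ w_1\circ r^2 = w_3\circ w_5$, and $\alpha_3\bullet\alpha_5=0$ so $w_3$ and $w_5$ commute; hence $\varphi = (w_5\circ w_3)\circ\psi\circ(w_5\circ w_3)^{-1}$ with no reduction-lemma computation at all. The difficulty you anticipate is exactly where there is none (had the differing letters sat in the interior of the word, your ``$\sigma_*$ must be short'' heuristic would have no justification, and the cancellation trick would not apply).

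Second, the theorem asserts \emph{explicit} formulas, so a proof must exhibit them: one has to compose the two elementary maps $w_3$ and $w_5$ of Theorem~\ref{thm:bir-weyl}, evaluated at the Schlesinger parameter values $b_1=\theta_0^2+\kappa_1$, $b_5=\theta_1^1$, $b_7=\theta_0^1-\theta_0^2$, etc.; this is what produces the stated $f$, $g$, the new identification of the $b_i$, and (after the shift $\theta_0^1\mapsto\theta_0^1-1$, $\theta_1^1\mapsto\theta_1^1+1$) the standard evolution $\bar b_{5,6}=b_{5,6}-1$, $\bar b_{7,8}=b_{7,8}+1$. You correctly call this part mechanical, but it is part of the claim and is left unexecuted along with $\sigma_*$ itself. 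Finally, your appeal to faithfulness of the birational representation in place of a direct check is defensible for generic parameters (an automorphism acting trivially on $\operatorname{Pic}(\mathcal{X})$ and respecting the fixed normalization of $b_4$ and $\chi(\delta)$ is trivial), but the paper still confirms the identity $\varphi=(w_5\circ w_3)\circ\psi\circ(w_5\circ w_3)^{-1}$ by direct computation, which is the cheap and safe way to close the argument once the explicit formulas are in hand.
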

\begin{proof}
Comparing the decomposition \eqref{eq:decomp-QRT} and \eqref{eq:decomp-ST} of the two mappings,
\begin{gather*}\begin{split}&
\varphi = r\circ w_{5}\circ ({\color{purple} w_{2}\circ w_{6}\circ w_{5}\circ w_{3}\circ w_{2}\circ w_{4}\circ w_{3}\circ w_{1}\circ w_{2}\circ w_{5}\circ w_{0}\circ w_{1}\circ w_{2}\circ w_{6}} )\circ w_{5},\\
&\psi = r\circ w_{1}\circ ({\color{purple} w_{2}\circ w_{6}\circ w_{5}\circ w_{3}\circ w_{2}\circ w_{4}\circ w_{3}\circ w_{1}\circ w_{2}\circ w_{5}\circ w_{0}\circ w_{1}\circ w_{2}\circ w_{6}} )\circ w_{3},
\end{split}
\end{gather*}
we immediately see that $\varphi = r\circ w_{5} \circ w_{1} \circ r^{2} \circ \psi \circ w_{3} \circ w_{5} = (w_{5}\circ w_{3})\circ \psi\circ (w_{5}\circ w_{3})^{-1}$ (note that in our case $w_{3} \circ w_{5} = w_{5}\circ w_{3}$). Then the mapping $w_{5}\circ w_{3}$, that can be easily computed from Theorem~\ref{thm:bir-weyl}, gives us the required change of variables. The resulting equivalence can then be verified by direct computation.
\end{proof}

\begin{Remark}\label{rem:conj}Note that it is also possible to establish the equivalence of two dynamics in the following way. For a root $\alpha\in W = W(\mathcal{D}_{2})$ we can define the so-called Kac's translation $\mathbf{t}_{\alpha}\in W$ that acts on the symmetry roots as $\mathbf{t}_{\alpha}\colon \beta \mapsto \beta+(\alpha,\beta)\delta$. Then, as shown in Section~6.5 of \cite{Kac:1985:ILA}, for $w\in W$, $\mathbf{t}_{w(\alpha)}=w^{-1}\circ \mathbf{t}_{\alpha}\circ w$:
\begin{gather*}
\big(w^{-1}\circ \mathbf{t}_{\alpha}\circ w\big) (\beta)=w^{-1}(w(\beta)+(\alpha,w(\beta))\delta)=\beta+(\alpha,w(\beta))\delta=\mathbf{t}_{w(\alpha)}(\beta).
\end{gather*}
Thus, the \emph{norm} $|\mathbf{t}_{\alpha}|^{2}:=-(\alpha\bullet\alpha)$ (where the negative sign reflects our sign choice for the Cartan matrix) is \emph{invariant} under conjugations.

This property can be extended to $\widetilde{W} = \widetilde{W}(\mathcal{D}_{2})$ as follows. Let $Q_{\mathbb{Q}} = Q\otimes{\mathbb{Q}}$ be the $\mathbb{Q}$-vector space spanned by the symmetry roots and let $\mathbf{t}_{\alpha}\in \widetilde{W}$, $\alpha\in Q_{\mathbb{Q}}$, act on $Q_{\mathbb{Q}}$ in the same way, $\mathbf{t}_{\alpha}\colon \beta \mapsto \beta+(\alpha\bullet\beta)\delta$. Then, as before, $\mathbf{t}_{w(\alpha)}=w^{-1}\circ \mathbf{t}_{\alpha}\circ w$ for any $w\in \widetilde{W}$ and so the norm is again preserved under conjugation. Thus, if $|\mathbf{t}_{\alpha}|\neq |\mathbf{t}_{\beta}|$ for $\alpha, \beta\in Q_{\mathbb{Q}}$, then $\mathbf{t}_{\alpha}$ and $\mathbf{t}_{\beta}$ are not conjugate, otherwise, if we can find $w\in \widetilde{W}$ such that $\beta=w(\alpha)$, then $\mathbf{t}_{\beta}=w^{-1}\circ \mathbf{t}_{\alpha}\circ w$.

In our case, it is easy to see that
 \begin{gather*}
\varphi_{*} = \mathbf{t}_{\frac{1}{3}(2\alpha_{5} + \alpha_{6} - 2\alpha_{3}-\alpha_{4})} \colon \
 (\alpha_0,\alpha_1,\alpha_2,\alpha_3,\alpha_4,\alpha_5,\alpha_6) \mapsto (\alpha_0,\alpha_1,\alpha_2,\alpha_3 + \delta,\alpha_4,\alpha_5 - \delta,\alpha_6),\\
\psi_{*}= \mathbf{t}_{\frac{1}{3} (\alpha_{3} - \alpha_{4} - \alpha_{5} + \alpha_{6} )} \colon \\
\qquad{} (\alpha_0,\alpha_1,\alpha_2,\alpha_3,\alpha_4,\alpha_5,\alpha_6) \mapsto
 (\alpha_0,\alpha_1,\alpha_2,\alpha_3 - \delta,\alpha_4 + \delta, \alpha_5 + \delta,\alpha_6 - \delta),
 \end{gather*}
and
\begin{gather*}
\big|\mathbf{t}_{\frac{1}{3} (2\alpha_{5} + \alpha_{6} - 2\alpha_{3}-\alpha_{4} )}\big| =\big|\mathbf{t}_{\frac{1}{3}(\alpha_{3} - \alpha_{4} - \alpha_{5} + \alpha_{6} )}\big| = \frac{4}{3}.
 \end{gather*}
Next, note that
 \begin{gather*}
 (w_{3}\circ w_{5})(\alpha_{3} - \alpha_{4} - \alpha_{5} + \alpha_{6}) = 2 \alpha_{5} + \alpha_{6} - 2 \alpha_{3} - \alpha_{4}.
 \end{gather*}
Hence $\psi=(w_{3}\circ w_{5})\circ \varphi \circ (w_{3}\circ w_{5})^{-1}$, exactly as we obtained previously.
\end{Remark}

\section{Conclusion}\label{sec:conclusion}
In this paper we showed how to determine whether two different discrete Painlev\'e dynamics are equivalent and if so, how to transform one into the other. The key technique is to use the algebraic structures underlying the theory of discrete Painlev\'e equations, especially the birational representation of the extended affine Weyl group of symmetries.

\subsection*{Acknowledgements}

A.D.'s work was partly supported by the University of Northern Colorado 2015 Summer Support Initiative. T.T.\ was supported by the Japan Society for the Promotion of Science, Grand-in-Aid~(C) (17K05271). We thank N.~Nakazono for explaining to us the techniques discussed in Remark~\ref{rem:conj}. We are very grateful to A.~Ramani, R.~Willox, and the referees for useful suggestions and corrections.

\pdfbookmark[1]{References}{ref}
\LastPageEnding

\end{document}